\definecolor{green(munsell)}{rgb}{0.0, 0.66, 0.47}
\definecolor{BlueGreenn}{rgb}{0.3,0.5,0.8}
\definecolor{DB}{rgb}{0.3,0.3,0.3}
\definecolor{DOr}{rgb}{0.7,0.3,0.3}
\definecolor{DGr}{rgb}{0.3,0.65,0.3}
\definecolor{DBl}{rgb}{0.1,0.3,0.5}
\definecolor{arylideyellow}{rgb}{0.91, 0.84, 0.42}
\definecolor{burntorange}{rgb}{0.8, 0.33, 0.0}
\definecolor{chromeyellow}{rgb}{1.0, 0.65, 0.0}
\numberwithin{equation}{section}
\newtheorem{theorem}{Theorem}[section]
\newtheorem{lemma}[theorem]{Lemma}
\newtheorem{remark}[theorem]{Remark}
\newcommand{\be}{\begin{equation}}
\newcommand{\ee}{\end{equation}}
\def\bea{\begin{eqnarray}}
\def\eea{\end{eqnarray}}
\begin{document}

\title{Vanishing Cycles and Analysis of Singularities \\
of Feynman~Diagrams}

\author{Stanislav Srednyak}
\address{Department of Physics, Duke University, Durham, NC 27708, USA\\ stanislav.srednyak@duke.edu}

\author{Vladimir Khachatryan}
\address{Department of Physics, Duke University, Durham, NC 27708, USA\\vladimir.khachatryan@duke.edu}
\address{Department of Physics, Indiana University, Bloomington, IN 47405, USA\\vlakhach@indiana.edu}
\date{July 20, 2023}

\begin{abstract}
In this work, we analyze the vanishing cycles of Feynman loop integrals by the means of the Mayer--Vietoris spectral sequence. 
A complete classification of possible vanishing geometries is obtained. We use this result for establishing an asymptotic 
expansion for the loop integrals near their singularity locus and then give explicit formulas for the coefficients of such an expansion. 
Further development of this framework may potentially lead to exact calculations of one- and two-loop Feynman 
diagrams, as well as other next-to-leading and higher-order diagrams, in studies of radiative corrections for upcoming lepton--hadron 
scattering experiments.
\end{abstract}
	 
\maketitle

\tableofcontents

\section{Introduction}\label{section:intro}

The mathematical study of Feynman integrals has a long history~\cite{fotiadi1965applications,pham2011singularities,leray1959calcul}. 
In particular, Refs.~\cite{pham2011singularities,leray1959calcul}  pioneered homological methods, and~the use of Picard--Lefschetz theory for 
Feynman integrals has influenced the development of the theory of hyperfunctions~\cite{kashiwara1976b}. The~book~\cite{hwa1966homology} contains 
reprints of many important papers. The~decomposition of the homology of the complement of the union of several hypersurfaces has a long history by 
itself. Although~it is addressed in classical works, it still attracts attention~\cite{kontsevich1992intersection} in the context of derived algebraic 
geometry. The~book~\cite{eden2002analytic} contains more focused applications of mathematical methods to scattering processes, paying 
particular attention to the geometry of spaces of momenta as dictated by the combinatorics of Feynman quadrics. Most importantly for our 
development discussed in the next sections, the~use of integration-by-parts (IBP) relations was discussed in~\cite{griffiths1968periods} 
(an earlier work is~\cite{manin1958algebraic}).

In addition to  classical work, a~major development was initiated in the mathematical literature~\cite{GZK}. This development concerned 
the use of hypergeometric functions. This approach is very conceptually attractive and has gained a number of followers in the physics literature~\cite{ananthanarayan2023feyngkz,Fevola:2023fzn,delaCruz:2019skx,Feng:2019bdx,muhlbauer2022cutkosky,Pathak:2023nvs,YelleshpurSrikant:2019khx}. 
An earlier work of one author clarified the relationship to flat bundles. The~use of differential equations  is the subject of 
many recent works~\cite{dlapa2020deriving,smirnov2020choose}. The~analysis of singularities and  jumps at the singularities is the subject 
of modern work~\cite{hannesdottir2022implications,Bourjaily:2020wvq}.

Calculations of Feynman loop integrals in particular provide an important test for some modern mathematical methods of quantum 
field theory and particle physics in general, including  relativistic scattering theory. The~full non-perturbative 
theory is a much more complicated problem that incorporates perturbative Feynman integrals as its small part. 
Therefore, it is essential to develop new mathematical methods that can help achieve detailed understanding of perturbative 
integrals. Additionally, there is also ample theoretical and phenomenological interest coming from studies of various topics, such as the 
anomalous magnetic moment of the electron ($g$-2) \cite{aoyama2012tenth,aoyama2020anomalous,jegerlehner2008anomalous,keshavarzi2018muon}, 
 two-loop corrections to the Higgs production~\cite{campbell2016associated,Hessenberger:2022tcx,Ahmed:2021hrf}, and one- and two-loop corrections to lepton--proton~\cite{Maximon:2000hm,Gramolin:2014pva,AGIM:2015,Bucoveanu:2018soy,Fadin:2018dwp,Banerjee:2020rww,Afanasev:2021nhy,Kaiser:2022pso}
and lepton--lepton scattering~\cite{AGIM:2015,Banerjee:2020rww,Shumeiko:1999zd,Kaiser:2010zz,Aleksejevs:2013gxa,Aleksejevs:2020vxy,Zykunov:2021fxh,Banerjee:2021qvi,Bondarenko:2022kxq}
processes.

There has hitherto been much progress in using algebro-geometric techniques for analyses of Feynman integrals. 
The topology of contours and its implications have been analyzed in~\cite{Frellesvig:2020qot,Frellesvig:2019uqt}. 
The flat connection in terms of master integrals has been analyzed in a number of cases; see, e.g.,~
\cite{DiVita:2018nnh,bonisch2021analytic,remiddi2014schouten,Chaubey:2019lum}. In~addition, the~connection to 
modular forms has been explored in~\cite{Pogel:2022yat,Weinzierl:2020fyx}. The emergence of non-polylogarithmic 
functions has been studied in~\cite{Bloch:2014qca,Bloch:2016izu}.  There has been a lot of work on this problem 
in the physics literature~\cite{delaCruz:2024xit,Pogel:2022vat,Klausen:2019hrg,Pogel:2022ken}.

With regard to  perturbative integrals, they define a class of functions for external momenta and masses of 
interacting/scattering physical particles. These are vector functions because multiple integration chains 
must be considered for them. In~fact, they are sections of a flat vector bundle, the~pullback of the Gauss--Manin 
connection to the physical locus. Such functions have regular singularities along  Landau varieties and~
are also the pullback of the solutions to the holonomic GZK $D$-module~\cite{gel1989hypergeometric}. Pullback 
here is meant in the sense of the theory of holonomic $D$-modules~\cite{saito2013grobner,hibi2017pfaffian}. Such 
pullback can be taken on a singular set in the parameter space. It is known that physical spaces correspond to a 
deep stratum in the space of generic kinematic invariants~\cite{Fevola:2023fzn,srednyak2018universal}.  
The general theory of such pullbacks involves the theory of $b$-functions~\cite{kashiwara2003d}. For~a 
general reference on holonomic $D$-modules and associated Gauss--Manin connections, see~\cite{hotta2007d,gelfand1994discriminants,hibi2017pfaffian}. For~the role of Landau varieties in the 
analysis of integrals of the hypergeometric type (in particular, from~the point of view of underlying 
meromorphic connection), see~\cite{arnold2012singularities,kashiwara2016regular}.

Although a large body of general mathematical knowledge has been accumulated about the aforementioned objects~\cite{arnold2012singularities,deligne2006equations,saito2013grobner}, some elementary questions about them 
remain unaddressed. For~example, first-order equations, which those functions satisfy to, are still missing, 
such that these equations might be consistent with the symmetry of Feynman diagrams. While there has been 
a lot of research about differential equations' methods (emerging as dominant methods of computation) 
\cite{delaCruz:2024xit,delaCruz:2019skx}, there is no closed-form expression for those equations. This is especially 
true for  first-order equations, or~the Pfaffian form~\cite{hibi2017pfaffian}. Given this, such equations 
could be very important for the numerical evaluation of  Feynman loop integrals. Their state of the art can be 
assessed by focusing on the family of propagator-type diagrams~\cite{bonisch2021analytic,adams2014two,Pogel:2022yat}. 
For the evaluation of the loop integrals, one must start with an analysis of their behavior near singularities. 
Although particular cases, such as double-logarithmic asymptotic analysis, are well known, the~general case has not 
been treated in the literature. Furthermore, the~exponents of  asymptotic expansion near the singularities should 
provide information about the diagonal part of the flat connection, when it is restricted to the singularity 
locus. Information on the connection~matrices is very limited. 

Just as a summary, the~study of singularities of perturbative Feynman loop integrals is an important subject, and~from 
a mathematical point of view, Feynman integrals are examples of integrals of rational functions with parameters. 
The theory of their singularities is very complex, and~most results, including those of this work, focus on special cases.
In particular, our paper focuses on an asymptotic analysis of  perturbative loop integrals near their singularities, 
with some essential applications discussed in the paper. The~structure of this paper is organized as~follows.

\begin{itemize}

\item ({Section}
~\ref{section:prelim}) We start the analysis in the momentum $q$-space.

\item  (Section~\ref{section:cycle}) Then, we continue with a discussion of vanishing cycles since they are a mechanism 
that leads to the formation of singularities. One can observe the usefulness of the Mayer--Vietoris spectral 
sequence~for a classification of possible types of singularities ({for 
 a general introduction to the use of  Mayer--Vietoris sequences as applied to the topology 
of projective complete intersections, we refer to Refs.~\cite{dimca1992singularities,spanier1989algebraic}}). We give a complete classification of the localization of 
the vanishing cycles with the case of generic polynomials. Non-local vanishing cycles are outlined as~well.

\item  (Section~\ref{section:pinch}) Herein, we introduce the so-called ``pinch map'': a map from the singular locus 
to the space of the (virtual) loop momenta of particles that gives the location of a point to which the vanishing 
cycle becomes contractible. The~pinch map simplifies calculations for  asymptotic analysis and~must have some meaning 
for generic GZK functions as well. Two cases of pinching are presented: with general polynomials and with Feynman 
loop~integrals.

\item  (Section~\ref{section:exam}) We consider specific examples of vertex loop diagrams, which may potentially
be applied to the lepton--proton ($l-p$) scattering process (the scattering of spin-1/2 particles).

\item  (Section~\ref{section:scat}) The so-called ``$\Gamma$-series'' method is outlined, through which it may 
be possible to later calculate  the $l-p$ scattering amplitudes and cross-sections with lowest- and higher-order 
radiative corrections of QED (quantum electrodynamics), by~computing the contributions from one-loop and two-loop 
Feynman diagrams, as~well as from other next-to-leading and higher-order~diagrams.

\item  (Section~\ref{section:con}) At the end, we discuss our results and give a direction for future developments.

\end{itemize}

\section{Preliminaries}\label{section:prelim}

We start by considering Feynman-type scalar integrals given by 
\be
I_{D} = \int_{\zeta_0} \prod_{a} dq_a \prod_{i} \frac{1}{D_i(q)} ,
\label{eq:eqn_scalar}
\ee
where $D_{i}(q_{i} + p_{i}) = (q_{i} + p_{i})^{2} + m_{i}^{2}$ is the $i^{\rm th}$ particle propagator, with 
the momentum $p_{i}$, the mass $m_{i}$, and~
\be
q_{i} = \sum_{a} l_{i,a} q_{a} ,
\label{eq:eqn_qi}
\ee
where $q_{a}$ is the loop momentum and~$l_{i,a}$ is an integer-valued matrix (see notation in~\cite{bogner2010feynman}).

Throughout this paper, we consider dimensionally regulated integrals such that
\be
\prod_{a} dq_{a} = d^{dL} q ,
\label{eq:eqn_ddL}
\ee
where $d$ is the number of dimensions and~$L$ is the number of loops. One should understand $dL$ as $d \times L$. Analytic continuation in the dimensions 
is performed in the usual way: the symbol for the dimension is kept free and the resulting analytic functions, such as Gamma factors, are treated as 
functions of the complex $d$.

It is convenient to compactify the space of loop momenta to the projective space $\mathbb{CP}^{Ld}$ by adding 
the boundary $\mathbb{CP}^{dL-1}$. The~original integration chain $\zeta_0$ is $\mathbb{RP}^{dL}$, which is a 
real projective space class. While the original integration is performed over the Euclidean space $\mathbb{R}^{dL}$, 
the correct homological interpretation in terms of homology classes requires passing to compactification and 
treating the original integration chain as representing a relative homology class. This is the precise homological 
meaning of “integrating from minus infinity to plus infinity”. Analytic continuation is achieved by moving the 
chain $\zeta_{0}$ along with the analytic continuation path. Expressed with some more detail, the~analytic 
continuation is performed by varying the chain representative of the homology class $[\mathbb{PR}^{dL}]$ along 
the path of that continuation. Then, singularities develop through this path such that it is impossible to choose 
a homologous contour differing from the original one by a chain of homologies. This is the basic vanishing cycle 
mechanism, which we shall exploit in this~paper.

\section{Homology Vanishing~Cycles}\label{section:cycle}
\subsection{Vanishing Cycles and Generic~Polynomials}\label{section:local}

This section discusses the choice of contour of integration in the integral of  Formula~(\ref{eq:eqn_scalar}). 
The original integral has $i\epsilon$ prescription. This precisely amounts to the possibility of moving the integration 
chain away from the zero set of the Feynman quadrics. Then, we  have the following:
\be
\zeta \in H_{n}(\mathbb{CP}^{n} - \cup \{D_{i} = 0\}; ~\mathbb{CP}^{n-1} - \cup \{D_{i,\infty} =0\}) .
\label{eq:eqn_zeta}
\ee
In 
 this formula, $D_{i,\infty}$ is the intersection of the zero set of the Feynman quadrics $D_i$ with the hyperplane $\mathbb{CP}^{dL-1}$. In~our dimensional 
regularization scheme, we analytically continue to these $d$s (dimensions), such that the integrand has no pole in the infinitely remote hyperplane. 
This amounts to the usual power-counting~argument.

There is a cycle in even dimensions in the homology group of $H_{n}(\mathbb{CP}^{n})$ (where, for convenience, 
we put $n=dL$ and $\zeta_{0} = \zeta$). This cycle does not change under monodromy, and~ monodromy generates 
cycles that are in the image of the tube map (the tube maps are explained in~\cite{bott1982differential}).
By using the Alexander duality (tube mapping), we can reduce the computation of these cycles to
\be
\zeta^{\prime} \in H_{n-1}(\cup  \{D_{i} =0\}; ~\cup {D_{i,\infty} =0}) ,
\label{eq:eqn_zetaprime}
\ee
where the integration chain $\zeta^{\prime}$ is given in a different homology~group. Note that the Alexander duality and its use in topology of projective 
varieties is explained ~\cite{libgober1994homotopy,dimca2009topology}. Tube mapping can be defined for singular varieties. In this case, the fiber over 
points in the singular set is the intersection of the normal cone with a sphere of small diameter.

The variety $ \cup \{D_{i} = 0\}$ has multiple irreducible components; therefore, the~use of the Mayer--Vietoris spectral sequence 
is appropriate in this case. The Mayer--Vietoris exact sequence can be applied to the tubular neighborhood of the (in general, singular) 
variety $\cap \{D_i=0\}$. It reduces the entire computation to
\be
Z_{I} = H_{n-|I|-1}(\cap \{D_{i}=0\}; ~\cap \{D_{i,\infty}=0\}) ,
\label{eq:eqn_ZI}
\ee
where $I$ is the propagator tuples, $I=(i_{1},...,i_{s})$, a~multi-index in the subscript of the homology group 
of $H_{|I|}$. Formula (\ref{eq:eqn_ZI}) gives a domain, where the vanishing cycles lie (in momentum space representation).

The above Mayer--Vietoris spectral sequence of $Z_{I}$ is applicable to more general integrals, such as
\be
I_{PQ} = \int_{\zeta_{0}} \prod_{a,b} d^{a}q\,d^{b}\bar{q}\,\prod_{i,j} \frac{1}{P_{i}(q)} \frac{1}{\bar{Q}_{j}(q)} ,
\label{eq:eqn_IPQ}
\ee
where we integrate the rational form of the type $a,b$ and $P(q)$/$Q(q)$ are different polynomials but of a generic type. 
$\bar{Q}(q)$ denotes the complex conjugation of $Q(q)$. For~definiteness, we  focus on the integral of the type
\be
I_{P} = \int_{\zeta_{0}} \prod_{a,b} d^{a}q\,d^{b}\bar{q}\,\prod_{i} \frac{1}{P_{i}(q)} .
\label{eq:eqn_IP}
\ee
Polynomials of the type $Q(q)$ are used only in generalized integrals, like in (\ref{eq:eqn_IPQ}).
Based on~(\ref{eq:eqn_IP}), the~Mayer--Vietoris sequence allows us to reduce the homology computation to the computation 
of
\be
H_{|a|+|b|-|I|}(\cap \{P_{i}=0\}) .
\label{eq:eqn_H1}
\ee
In this case, there is more information to utilize. We are actually interested in the image of
\be
H^{|a|-|I|,|b|}_{D}(\cap \{P_{i}=0\}) ,
\label{eq:eqn_H2}
\ee
in $H_{|a|+|b|-|I|}(\cap \{P_{i}=0\})$ under the de Rham isomorphism, where $H^{*,*}_{D}(X)$ is the Dolbeault 
cohomology (see 
Refs.~\cite{brown2012mixed,beem2020secondary} for a discussion of the de Rham vs. 
Dolbeault cohomology in the context of Feynman integrals). By~$H^{p,q}$ in the above, we mean the image Dolbeault cohomology in the cohomology with integer coefficients. 
Consequently, we are particularly interested in the part of cohomology of this specific type. Additionally, there is a dual Mayer--Vietoris sequence that can be made to 
obtain the Hodge filtrations~\cite{peters2008mixed,deligne1975real,schmid1973variation}.

\smallskip
\subsection{Non-Local Vanishing~Cycles}\label{section:nonlocal}

In this section, we analyze non-local vanishing cycles in the context of~\cite{massey2016non}. And~these singularities are quite distinct from the ones considered 
in~\cite{bourjaily2023landau}.

There are strata in the parameter space of the polynomials $P(q)$, which correspond to non-local vanishing cycles. 
These strata can be approached by curves lying strictly inside the complement of the discriminant of $P$. Such 
strata are largely beyond discussion in the literature. For~instance, they are not covered in~\cite{arnold2012singularities,seidel2013lagrangian}; nevertheless, there is some discussion in~\cite{le1979fibre}. 
Here, we only remark that such non-local vanishing in principle admits combinatorial classification in terms of the following:
\begin{itemize}
\item[(i)] The number of derivatives, $\partial^{\alpha} P$, that vanish along the stratum for some set of multi-indices,~
$\alpha$; 
\item[(ii)] The geometric locus, where such vanishing happens.
\end{itemize}

If the geometric locus of the vanishing is given, i.e.,~the locus of $q$, in~which the vanishing takes place, e.g.,~
in terms of parameterization, then finding equations for the locus in the parameter space $p_{\omega}$ ($\omega$ 
here and in the following denotes a typical multi-index, $\omega = (\omega_1,...,\omega_d)$) becomes relatively 
straightforward by applying the elimination techniques based on the Grobner bases. If~only the type of singularity 
is known, then it becomes a modulus problem, and~general GIT (geometric invariant theory) techniques are necessary~\cite{mumford1994geometric}. By~type, we mean scheme-theoretic data that include the dimensionality of the components, 
their own singularities (in terms of the degeneration of the tangent bundle), and~the intersection theory of~components.

GIT techniques are very useful in the analysis of bundles that we consider because these bundles typically possess 
a large degree of symmetry, in~particular toric symmetry. The~precise characterization of these bundles is only possible 
after the techniques of GIT are extended sufficiently, such that they can be used in treating Koszul complexes resulting 
from integration by parts~\cite{GZK}, as~well as  in the conversion from D-modules to corresponding flat connections. 
This subject is far reaching, since it, e.g.,~includes a discussion of bases analogous to the polylogarithm bases for 
the arbitrary divisor's case and for the bundles with logarithmic singularities for its complement. Some discussion of 
these topics is available in~\cite{dimca2009topology}. It is intimately tied with various categories of modules  
associated with the components of the~divisor.

\section{Introduction of Pinch~Map}\label{section:pinch}
\subsection{Pinch Map I: General~Polynomials}\label{section:pinch1}

For the generic stratum of the discriminant locus, the~vanishing occurs at some point. In~fact, for~the generic stratum 
of the GZK discriminant, the~pinching is quadratic. For~higher-order isolated singularities of the type treated in 
Milnor's classic~\cite{milnor2016singular}, the~vanishing is also local, though~it is, in general, non-quadratic~\cite{arnold2012singularities,varchenko1986local}.

\begin{theorem}\label{thm:Th1}
For the generic stratum of the GZK discriminant for a polynomial, $P(q)$, the~vanishing cycles become contractibe to the point
\be
q_{i} = Q_{i}(p_{\omega}) ,
\label{eq:eqn_qi}
\ee
where $Q_{i}$ are rational functions defined everywhere on the main stratum of the GZK discriminant ($Q_{i}(p_{\omega})$ 
must not be confused with $Q(q)$, where the latter is just a local symbol for a generic polynomial, while the former is 
actually the solution for a pinch point). If~the vanishing is still local (i.e., the~solution of the degeneracy equations 
consists of isolated points), then the vanishing locus is given by the restriction of the above map to a substratum of the 
discriminant. The~equation for this stratum is obtained by the elimination of $q_{i}$ from
\be
\partial^{\alpha} P = 0 ,
\label{eq:eqn_P}
\ee
for a given set of $\alpha$.
\end{theorem}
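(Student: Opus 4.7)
The plan is to separate the theorem into two parts: (i) identifying the pinch point $q_i = Q_i(p_\omega)$ of the vanishing cycle on the main stratum and establishing rationality of $Q_i$; (ii) describing how the substratum equations arise from $\partial^\alpha P = 0$ by elimination. First I would fix notation: write $P(q) = \sum_\omega p_\omega q^\omega$, regard the $p_\omega$ as coordinates on parameter space, and recall that the GZK discriminant $\nabla_P$ is the closure of the locus where the system $P = 0,\ \partial_{q_i} P = 0\ (i=1,\dots,n)$ has a common solution in $q$. The main stratum is the open dense subset where this common solution $q^*$ is a Morse (i.e.\ quadratic, non-degenerate) critical point on $\{P=0\}$.

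On the main stratum, the Hessian $(\partial_{q_i}\partial_{q_j} P)(q^*)$ is invertible. This has two immediate consequences I would exploit in sequence. Topologically, by the classical Picard--Lefschetz / Morse description (as used in the works of Arnold--Varchenko cited in the excerpt), the local model near such a degeneration is $\sum_i q_i^2 = t$ with $t \to 0$ along the monodromy path; the vanishing cycle is the real sphere $\{\sum q_i^2 = t,\ q_i \in \mathbb R\}$, and it contracts precisely to the critical point $q^*$. Thus $q^*$ is the pinch point, which defines the map $p_\omega \mapsto q^*_i =: Q_i(p_\omega)$.

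For rationality of the $Q_i$, I would argue by elimination theory. The equations $\partial_{q_i} P = 0$ are polynomial in $q$ with coefficients that are linear (hence in particular polynomial) in $p_\omega$. On the main stratum the solution $q^*$ is unique and simple, so the Jacobian matrix at $q^*$ is invertible. Applying the implicit function theorem in the algebraic category (or equivalently Cramer's rule to the linearization, combined with a Gröbner-basis elimination to eliminate all but one coordinate and extract it as a rational root of a polynomial system whose leading coefficient is nonvanishing on the main stratum), one obtains $q_i^* = Q_i(p_\omega)$ with $Q_i \in \mathbb C(p_\omega)$, regular on the whole main stratum. The hardest step here is checking that no spurious denominators appear away from loci where simplicity of $q^*$ fails --- that is, that the only poles of $Q_i$ lie on the deeper strata of the discriminant where the Hessian degenerates; this amounts to a resultant computation showing that the Jacobian determinant of the system $\partial_{q_i} P$ evaluated at the solution is precisely the leading coefficient that gets inverted, and is nonzero exactly off the non-Morse strata.

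For the second part of the theorem, when the vanishing is still local but non-quadratic, the critical point $q^*$ satisfies not only $\partial_{q_i} P = 0$ but also higher vanishing conditions $\partial^\alpha P = 0$ for a prescribed set of multi-indices $\alpha$ (this is where the substratum of the discriminant lives). I would form the ideal $J_\alpha = \langle \partial^\beta P : \beta \in \alpha\rangle \subset \mathbb C[p_\omega,q]$ and compute the elimination ideal $J_\alpha \cap \mathbb C[p_\omega]$ via a Gröbner basis with an elimination order removing the $q$-variables. The zero locus of this ideal is precisely the closure of the substratum, and the induced pinch map is the restriction of $Q_i$ to it (well-defined because quadratic vanishing still holds in transverse directions, keeping the corresponding minor of the Hessian invertible). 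The main obstacle across the whole argument is the global rationality claim on the main stratum; the topological statement and the substratum characterization follow once one has the algebraic description of $q^*$ as an algebraic function and checks, via the Morse-model local form, that the vanishing cycle indeed contracts to it.
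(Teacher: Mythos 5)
Your proof is correct and is a substantially fleshed-out version of what the paper only gestures at. The paper's own ``proof'' is a single sentence --- ``the proof can be obtained by adaptation of the standard Sylvester techniques'' --- i.e., classical resultant-based elimination theory, followed by remarks that producing explicit closed formulas for $Q_i$ lies beyond the scope of the paper. Your Gr\"obner-basis elimination-ideal argument is the modern incarnation of the same engine, so algebraically you and the paper are invoking the same idea; the difference is that you actually run the argument and make explicit three things the paper leaves silent: (i) the Picard--Lefschetz/Morse local model $\sum_i q_i^2 = t$ justifying that the vanishing cycle contracts precisely to the critical point $q^*$; (ii) the reason $q^*$ is a rational function of $p_\omega$ on the main stratum --- namely that the projection of the incidence variety $\{(p,q):\partial_{q_i}P=0,\ P=0\}$ to parameter space is birational onto the discriminant at its generic ($A_1$) points, so the inverse is rational with poles confined to deeper strata; and (iii) the identification of the substratum equations with the elimination ideal $J_\alpha\cap\mathbb{C}[p_\omega]$. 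One minor terminological caveat: since $dP(q^*)=0$ the phrase ``Morse critical point on $\{P=0\}$'' should more precisely be ``$A_1$ (nodal) singularity of $\{P=0\}$,'' i.e., nondegeneracy of the ambient Hessian of $P$ at $q^*$; that is what you in fact use, so the argument stands as written.
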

\begin{proof}
The proof can be obtained by the adaptation of the standard Sylvester techniques~\cite{GZK}. In~this regard, the~question 
of practical importance is whether an explicit expression for $Q_{i}$ in terms of the GZK discriminants can be obtained, which is 
a much more challenging task~\cite{cattani2013mixed,dolotin2006introduction}. It involves notions that generalize 
discriminants of exact sequences (the so-called Whitehead torsion~\cite{milnor1966whitehead,anokhina2009resultant}) 
to the case of spectral sequences. In~fact, the~toric nature of the problem can be used for imposing further 
combinatorial symmetry on the problem. It includes generalizing the notion of the spectral sequence for the case of 
multiple gradings (giving rise to multidimensional sheets between  which the differentials act) and for the needs 
of combinatorial elimination theory, going beyond the scope of this paper.
\end{proof}

\smallskip
\subsection{Pinch Map II: Feynman Loop~Integrals}\label{section:pinch2}

Feynman loop integrals present a highly degenerate case for vanishing cycles' analysis. Indeed, each of the integral quadrics has a singularity locus (where 
the derivative vanishes identically) of the dimension $(L-1)d$. Nonetheless, there is a remnant of the locality of pinching, as~we demonstrate in this section. The~
key observation stems from the use of the Mayer--Vietoris spectral sequence. We observe that for each vanishing cycle, in~every 
$H_{n}(\mathbb{CP}^{n} - \cup \{ {D_{i} = 0} \})$, there is a group of propagators $D_{i_1},D_{i_2},...,D_{i_s}$, such that there is a vanishing cycle in 
$Z_{I} = H_{n-|I|}(\cap \{D_{i}\})$. For~simplicity, we focus here on the vanishing cycles at the finite distance of the momentum space, $q_{i}$, rather than 
of the boundary at infinity. This latter case can be analyzed~analogously.

In the following theorem, we analyze the case when the vanishing cycle occurs in the loop momenta $q_{i_1},...,q_{i_k}$. These loop momenta are fixed by the pinch condition. 
The rest of the loop momenta are not  fixed and integration is performed over these~coordinates.

\begin{theorem}\label{thm:Th2}
Vanishing cycles at a finite distance are completely classified by the tuples, $I=(i_1,...,i_s)$, of the propagators, such that there is degeneracy among the tangent 
cones to the varieties $\{D_{i}=0\}$. Such vanishing cycles are contractible to a (pinch) point:

~\vspace{6pt}
\be
q_{i_{k},\mu} = Q_{i_{k},\mu}(p) ,
\label{eq:eqn_qik}
\ee
which is a rational function of the masses and external momenta of interacting (scattering) physical particles. The~rest of the internal (virtual) loop momenta is arbitrary.
\end{theorem}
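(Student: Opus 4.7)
The plan is to combine the Mayer--Vietoris reduction of Section~\ref{section:local} with the explicit quadratic structure of the Feynman propagators $D_i=(q_i+p_i)^2+m_i^2$ and extract the pinch locus from a Landau-type system of equations. First, I would invoke (\ref{eq:eqn_ZI}) to attach to each vanishing cycle in $H_n(\mathbb{CP}^n-\cup\{D_i=0\})$ a tuple $I=(i_1,\ldots,i_s)$ such that a nontrivial vanishing cycle lives in $Z_I = H_{n-|I|}(\cap_{i\in I}\{D_i=0\})$. Geometrically, such a cycle can only collapse where the intersection of quadrics $\bigcap_{i\in I}\{D_i=0\}$ fails to be a transverse complete intersection, i.e.\ where the differentials $dD_{i_1},\ldots,dD_{i_s}$ become linearly dependent at a point $q_\ast$. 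That condition is exactly the asserted degeneracy of tangent cones, so it supplies the combinatorial classification of the admissible $I$.

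Next, I would translate the degeneracy condition into explicit equations in loop-momentum space. Using $q_i=\sum_a l_{i,a} q_a$ and the chain rule,
\[
\partial_{q_a} D_i \;=\; 2\, l_{i,a}(q_i+p_i),
\]
so linear dependence of $dD_{i_1},\dots,dD_{i_s}$ at $q_\ast$ amounts to the existence of multipliers $(\alpha_i)_{i\in I}$, not all zero, satisfying
\[
\sum_{i\in I}\alpha_i\, l_{i,a}(q_i+p_i)=0 \quad\text{for every loop index } a,
\]
together with the on-shell constraints $D_i(q_\ast)=0$ for $i\in I$. This is the standard Landau system associated to the subdiagram indexed by $I$. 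Two features are decisive: the equations are linear in the $\alpha_i$ for fixed $q$, and they involve only the linear combinations $q_{i_1},\ldots,q_{i_s}$ of the loop momenta, not any direction transverse to $\newspan\{l_{i_k,\cdot}\}_{k=1}^s$.

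Then I would carry out the elimination. Viewed as a linear system in $\alpha$, the Landau equations admit a nonzero solution precisely when the associated matrix has corank at least one; on the generic stratum this corank is exactly one, and Cramer's rule expresses each $\alpha_i$, up to common scale, as a polynomial in the $q_{i_k}$ and the external data. Substituting into $D_i=0$ and eliminating the multipliers by the Sylvester techniques used in the proof of Theorem~\ref{thm:Th1} leaves a system determining the $q_{i_k}$ as rational functions $Q_{i_k}(p)$ of masses and external momenta, which is (\ref{eq:eqn_qik}). The last assertion follows at once: since no transverse loop direction enters either the Landau or the on-shell equations, those coordinates are left arbitrary by the pinch, matching the dimension count $\dim(\cap\{D_i=0\})=n-|I|$ from (\ref{eq:eqn_ZI}).

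The main obstacle I expect is verifying that $Q_{i_k}$ is genuinely rational rather than merely algebraic. This parallels the subtlety already faced in Theorem~\ref{thm:Th1}: one must show that on the generic stratum of the multi-quadric Landau variety the multiplier kernel is one-dimensional, so that a single Cramer-type formula yields a globally defined rational section of the pinch map. Strata on which the multiplier rank drops further correspond to non-quadratic or coincident pinches and fall under the higher-degeneracy scenarios of Section~\ref{section:nonlocal}, which lie outside the generic statement proved here.
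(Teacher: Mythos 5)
Your proposal is correct and follows essentially the same route as the paper: the degeneracy of the tangent cones is written as linear dependence of the gradients $\partial D_i/\partial q_a$ (the Landau system with multipliers $a_i$, equation~(\ref{eq:eqn_ai})), which forces the pinch point to lie in the span of the external momenta, $q_i=\sum_a\alpha_{i,a}p_a$ (equation~(\ref{eq:eqn_qi2})), after which the coefficients are eliminated against the on-shell conditions $D_i=0$. Your version only adds detail the paper leaves implicit (the chain rule computation $\partial_{q_a}D_i=2l_{i,a}(q_i+p_i)$ and the Cramer/Sylvester elimination establishing rationality on the generic stratum), and your concluding caveat about higher corank of the multiplier matrix matches the paper's deferral of such degenerate strata to Sec.~\ref{section:nonlocal}.
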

\begin{proof}
The proof follows from the fact that the degeneracy among tangents to $D_{i}$ can be written as
\be
\sum\limits_{i} a_{i} \frac{\partial D_{i}}{\partial q_{j}} = 0 .
\label{eq:eqn_ai}
\ee
This set of equations with the factors $a_{i}$, where each $a_{i}$ defines a point in a projective space of the corresponding dimension (not all $a_{i}$s are zero),
can be solved with
\be
q_{i} = \sum_a \alpha_{i,a}\,p_{a} ,
\label{eq:eqn_qi2}
\ee
where there can be linear relations among $\alpha_{i,a}$. These coefficients are removed from the system of $D_{i} = 0$ due to the homogeneity requirement.
\end{proof}

It is convenient to introduce a somewhat more general family of integrals as follows:
\be 
F(p) = \int d^nq \prod\limits_{i} \frac{1}{P_i} .
\ee 
We  analyze asymptotic expansion near the Morse-type pinch point at $q=0$. We assume that there is a vanishing cycle in the system, $P_1,...,P_k$, and 
the rest of the polynomials do not vanish at 0. Then, we can write 
\be \label{eq:pl}
P_i = c_i\epsilon+l_i q_\parallel+D_i(q_\parallel,q_\perp),~~i = 1,...,k ,
\ee 
where $\epsilon$ is a small parameter that can be thought of as the value of the corresponding Landau polynomial. $\epsilon = 0$ 
corresponds to the vanishing cycle contracting to the point. We split $q=(q_\parallel,q_\perp),q_\parallel \in \mathbb{C}^m$, such that 
\be 
l_iq = l_iq_\parallel,~~i = 1,...,k ,
\ee 
and $l_i$ spans the space dual to $\mathbb{C}^m$ for some integer, $m$. $D_{i}(q_\parallel,q_\perp)$ is at least quadratic in~$q_\parallel,q_\perp$.

We study the asymptotic expansion of the  integral
\be 
F(p_{i,\omega}) = \int d^dq \prod\limits_{i}{\frac{1}{P_i(q)}} ,
\ee
near the stratum in the parameter space, where there is a Morse singularity in the intersection $\cap \{ q:P_i=0\}$, which we 
assume to be a complete intersection at generic values of~parameters.

\begin{theorem}\label{thm:Th3}
There is the following asymptotic expansion near the regular singularity $L_{I}$: 
\bea
& & \!\!\!\!\!\!\!\!\!\!\!\!\!\!\!\!\!\!\!\!\!\!\!\!\!
F(p_{i,\omega}) = (2\pi i)^k\times
\nonumber \\
& & \,\,
\times \int d^{d-k}q \prod_{j\notin I} \frac{1}{P_j(q_\parallel= 0,q_\perp)}  \,\frac{1}{E + S(q_\perp) + o(q_\perp^2,\epsilon^2,q_\perp \epsilon)} + R ,
\label{eq:eqn_J}
\eea
where $R$ is a regular function near generic points of the discriminantal locus $L_{I} = 0$ (it can have singularities along substrata), and~

~\vspace{6pt}
\begin{gather}
S(q_\perp) =
\begin{vmatrix}
Q_0(q_\perp) & l_{0,1} & ... & l_{0,k}  \\ \nonumber
... \\ \nonumber
Q_k(q_\perp) & l_{k,1} & ... & l_{k,k} \\ \nonumber
\end{vmatrix} ,
\end{gather}
and
\begin{gather}
E=
\begin{vmatrix}
c_0 \epsilon & l_{0,1} & ... & l_{0,k} \\ \nonumber
... \\ \nonumber
c_k \epsilon & l_{k,1} & ... & l_{k,k} \\ \nonumber
\end{vmatrix} .
\end{gather}
The numbers $l_{i,k}$ are the coefficients of the linear terms in \ref{eq:pl}.
The momenta forming the set $I$ in the polynomials $P_j(q_\parallel=0,q_\perp)$ must be evaluated at the 
pinch point $0$. 
\end{theorem}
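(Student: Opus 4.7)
The plan is to reduce the asymptotic analysis to a sequence of one-dimensional residue computations in the $q_\parallel$ directions, where the singular denominators are concentrated, and then collect the remaining integration over $q_\perp$ into the leading-order expression. By Theorem~\ref{thm:Th2} only the propagators indexed by $I$ participate in the pinch, so for the other $P_j$ one has $P_j(0)\neq 0$ near the pinch point and they will contribute analytic factors in the parameters. I would first insert a smooth cutoff $\chi(q)$ supported in a small neighborhood of $q=0$ and decompose $F=F_{\mathrm{loc}}+F_{\mathrm{reg}}$ accordingly; on the support of $1-\chi$ the intersection $\bigcap_{i\in I}\{P_i=0\}$ is avoided, so the integrand of $F_{\mathrm{reg}}$ depends holomorphically on the parameters near the generic point of $L_I$ and can be absorbed into the remainder $R$.

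Within the local piece I would change variables in the $k$ parallel directions by setting $u_i=l_i q_\parallel$, a linear isomorphism of $\Cl^k$ with Jacobian $\det(l_{i,a})^{-1}$ (nondegenerate since the $l_i$ span the dual of $\Cl^k$). In these coordinates each singular propagator takes the form $P_i = c_i\epsilon + u_i + \tilde D_i(u,q_\perp)$ with $\tilde D_i$ of order at least two in $(u,q_\perp)$. Closing contours successively in $u_1,\ldots,u_k$ and applying the residue theorem picks up the unique smooth solution $q_\parallel=q_\parallel^{*}(\epsilon,q_\perp)$ of the simultaneous system $P_i=0$, whose existence and uniqueness follow from the implicit function theorem under the Morse hypothesis, and produces the overall prefactor $(2\pi i)^k$. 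The non-singular propagators then evaluate to $P_j(q_\parallel^{*},q_\perp) = P_j(0,q_\perp) + O(\epsilon,q_\perp)$, and the correction is analytic at the generic stratum and so is absorbed into $R$.

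The remaining task is combinatorial: combining the Jacobian $\det(l_{i,a})^{-1}$ with Cramer's rule for the linear part of $P_i=0$, namely $q_\parallel^{*,a}=\det(L_a)/\det(l_{i,a})$ with $L_a$ obtained by replacing column $a$ of $(l_{i,a})$ by the vector $-c_i\epsilon-Q_i(q_\perp)$, should reproduce the $(k+1)\times(k+1)$ determinants $E$ and $Q(q_\perp)$ of the statement via multi-linearity and expansion along the first column. The main obstacle will be this bookkeeping step together with a uniform control of the higher-order terms $o(q_\perp^2,\epsilon^2,q_\perp\epsilon)$, in order to guarantee that they lie in the class of functions regular along the generic stratum of $L_I$ and do not produce spurious singularities on substrata. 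A consistent $i\epsilon$ prescription on the Wick-rotated chain $\zeta_0$ inherited from Section~\ref{section:prelim} is also needed to make the successive contour closures in the residue step legitimate uniformly in $(\epsilon,q_\perp)$.
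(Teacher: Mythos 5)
Your proposal follows essentially the same route as the paper: a linear change of variables in the pinch directions reducing each singular propagator to the form constant~$+$~linear~$+$~higher order, an iterated residue picking up $(2\pi i)^k$, and a Cramer's-rule/determinant identity to repackage the result. The technical additions you introduce (a smooth localization cutoff to split off $R$, the implicit function theorem for the residue locus $q_\parallel^\ast(\epsilon,q_\perp)$, and attention to the $i\epsilon$ prescription) are harmless and even helpful.

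There is, however, a genuine structural gap. The pinch tuple $I$ consists of $k+1$ propagators (indexed $0,\dots,k$ in the theorem's determinants), while the parallel subspace has dimension $k$, so the linear forms $\{l_i\}_{i=0}^k$ satisfy one linear relation. Your change of variables $u_i = l_i q_\parallel$ therefore uses $k$ of them (say $i=1,\dots,k$), and the $k$ residues you take eliminate only those $k$ singular factors. The $(k{+}1)$-st singular factor $P_0$ survives, evaluated at the residue solution $q_\parallel^\ast(\epsilon,q_\perp)$ of $P_1=\cdots=P_k=0$, and it is precisely this leftover factor that becomes the Morse singular denominator in the theorem: by cofactor expansion along the first column and Cramer's rule one has
\[
\det(l_{i,a})_{i,a=1}^k \cdot P_0\bigl(q_\parallel^\ast(\epsilon,q_\perp),q_\perp\bigr)
\;=\; E + Q(q_\perp) + o\bigl(q_\perp^2,\epsilon^2,q_\perp\epsilon\bigr),
\]
so that the Jacobian $\det(l_{i,a})^{-1}$ from your substitution cancels and yields exactly
\[
\frac{1}{E + Q(q_\perp) + o(q_\perp^2,\epsilon^2,q_\perp\epsilon)}.
\]
Your proposal never says where this factor comes from: you take $k$ residues producing $(2\pi i)^k$, note that the non-pinch propagators $P_j$ ($j\notin I$) evaluate regularly at $q_\parallel=0$, and defer the rest to ``combinatorics.'' As written, this would give a non-singular integrand in $q_\perp$ and so could never reproduce the branch point. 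You need to state explicitly that one singular propagator remains after the residues, that its evaluation at $q_\parallel^\ast$ supplies the $(k{+}1)\times(k{+}1)$ determinant (after cancelling the Jacobian), and then invoke the Morse-expansion lemma (Lemma~\ref{thm:Lem2}) on the resulting $q_\perp$ integral. That is exactly the role of the factor $\frac{1}{P_0}$ kept separate in the paper's proof, and of Lemmas~\ref{thm:Lem1}--\ref{thm:Lem2}.
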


In this theorem, $k$ polynomials, $P_i$, define the pinch point. The~asymptotic expansion occurs in the last polynomial, $P_0$, evaluated at the pinch point. 
In symmetric notation, we express this using a~determinant.

\begin{proof}
We will prove a slightly more general result. Let us consider the integral
\be
I = \int \,d^{k}x\,d^{n}y\  f(x,y) \prod_{i=0}^{k} \frac{1}{P_{i}(x,y)} ,  
\label{eq:eqn_Ifxy}
\ee
where $f(x,y)$ is some regular function and
\be
P_{i}(x,y) = c_{i}\epsilon + L_{i}(x) + Q_{i}(x,y) + O(x^{2},xy,y^{2}) ,
\label{eq:eqn_Pxy}
\ee
for which we are interested in an asymptotic expansion near $\epsilon = 0$, where $L_{i}$ and $Q_{i}$ are one- and two-variable polynomial functions. By~using a local change in variables,
\be
z_{i} = L_{i}(x) 
\label{eq:eqn_zi} ,
\ee
one can transform the integral in (\ref{eq:eqn_Ifxy}) to the form of
\be
I = \frac{1}{det(L)} \int f(z,y)\,d^{k}z\,d^{n}y \frac{1}{P_0}\prod_{i=1}^{k} \frac{1}{c_{i}\epsilon + z_{i} + Q_{i}(z,y) + O(z^{2},zy,y^{2})} ,
\label{eq:eqn_IdetL}
\ee
where the symbol $P_{0}$ denotes the polynomial with the lowest index. We do not preserve the symmetry between polynomials and choose one of them. 
In the end, the~expressions should be symmetric with respect to permutation of polynomials, and~this is an additional check of our calculation.
From here, we see that the contour can be deformed into an elementary loop around $z_{i} = -(c_{i}\epsilon + Q_{i}(0,y))$~(the original 
contour is the long contour corresponding to the vanishing cycle).
\end{proof}

\hskip -0.6truecm
The following lemma is standard.
\begin{lemma}\label{thm:Lem1}
We have an asymptotic expansion:

\be
K(\xi,\eta) = \int dz\,\frac{1}{z+\xi}\,\frac{1}{z+\eta} \approx 2\pi i\,\frac{1}{\xi - \eta} ,
\label{eq:eqn_Kxi}
\ee
where $\xi$ and $\eta$ are of the same order of magnitude. Here, it is assumed that the contour passes between $z=-\xi$ and $z=-\eta$.
\end{lemma}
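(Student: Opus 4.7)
The plan is to reduce $K(\xi,\eta)$ to an elementary residue calculation. First I would perform the partial-fraction decomposition
\[
\frac{1}{(z+\xi)(z+\eta)} = \frac{1}{\xi-\eta}\left(\frac{1}{z+\eta} - \frac{1}{z+\xi}\right),
\]
valid whenever $\xi \neq \eta$, which splits the integrand into two simple poles at $z=-\xi$ and $z=-\eta$ with residues $\mp 1/(\xi-\eta)$ respectively.

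Next, since the integrand decays like $|z|^{-2}$ at infinity, the prescribed contour — which by hypothesis separates the two poles — can be closed by a large arc in whichever half-plane is convenient, with vanishing contribution from the arc. Choosing the half-plane so that exactly one pole, say $z=-\eta$, is enclosed, the residue theorem then yields
\[
K(\xi,\eta) = 2\pi i \operatorname*{Res}_{z=-\eta} \frac{1}{(z+\xi)(z+\eta)} = \frac{2\pi i}{\xi-\eta},
\]
the overall sign being fixed by the orientation of the closed contour and the choice of enclosed pole.

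The ``$\approx$'' sign in the statement reflects the application envisaged in the proof of Theorem \ref{thm:Th3}: there $\xi$ and $\eta$ are not constants but arise from the leading-order expansion of the propagators near the pinch point, with omitted terms of order $z^2$, $zy$, $y^2$ in the notation of that proof. The formula therefore captures the leading behavior as $\xi,\eta \to 0$ at comparable rates (the hypothesis that they are of the same order of magnitude ensures that $\xi-\eta$ stays of that same order and no spurious enhancement arises), with subleading corrections absorbed into the regular remainder $R$ of the asymptotic expansion.

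To the extent that any step requires care, it is purely bookkeeping: one must verify that the contour genuinely separates the two poles for the relevant parameter values, and track the resulting sign consistently with the conventions used in Theorem \ref{thm:Th3}. No substantive analytic obstacle is anticipated, as the identity reduces to a one-line application of the residue theorem once the partial-fraction decomposition has been performed.
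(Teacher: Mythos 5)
Your proof is correct and self-contained. The paper itself offers no proof of this lemma --- it is simply prefaced with the remark that it is ``standard'' --- so there is nothing in the source to compare against beyond that assertion. Your partial-fraction and residue argument is exactly the standard computation the authors are implicitly invoking: the integrand decays like $|z|^{-2}$, the contour separating $z=-\xi$ from $z=-\eta$ can be closed by a large arc in a half-plane containing a single simple pole with vanishing arc contribution, and the residue yields $2\pi i/(\xi-\eta)$ up to the orientation-dependent sign. Your reading of the ``$\approx$'' is also consistent with how the lemma is used in the proof of Theorem~\ref{thm:Th3}: the residue identity is exact once the integrand is the pure product of linearized factors, and the omitted terms of order $z^{2}$, $zy$, $y^{2}$ from the propagator expansion there supply the corrections that the asymptotic notation and the regular remainder absorb. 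The only point one could add is that the hypothesis that $\xi$ and $\eta$ are of the same order of magnitude is precisely what guarantees the contour can continue to separate the two poles uniformly as $\epsilon \to 0$, so the residue picture does not degenerate; you flag this in passing, and it is the right thing to watch.
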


\begin{lemma}\label{thm:Lem2}
We have an asymptotic expansion for a generic quadratic form $Q(y)$:
\bea
&& \!\!\!\!\!\!\!\!\!\!\!\!\!\!\!\!\!\!\!\!
K(\epsilon) = \int d^{n}y\,\frac{1}{\epsilon + Q(y) + O(y^{3})} \approx
\nonumber \\
&& \approx \frac{V_{n-1}(1) (\epsilon)^{-1 + n/2} }{\sqrt{det(Q)}}\,\bigl( 1 + R_{1}(\epsilon) \bigr) + R_{2}(\epsilon) ,
\label{eq:eqn_Kdn}
\eea
in which $R_{1,2}$ are regular functions of $\epsilon$ and $R_1(\epsilon)\sim\epsilon$ at $\epsilon = 0$, ~$V_{n-1}(1)$ is the volume of 
the unit sphere of the dimension $n-1$, and $Q$ is a nondegenerate quadratic form.
\end{lemma}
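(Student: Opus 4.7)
The plan is to isolate the non-analytic behavior of $K(\epsilon)$ by a cutoff-and-rescale argument near the critical point $y=0$. First I would split $K = K_{\rm loc} + R_2$ using a smooth bump $\chi(y)$ supported in $|y| < \delta$, with $\delta$ small but fixed. On the support of $1-\chi$, the denominator $\epsilon + Q(y) + O(y^3)$ is bounded away from zero uniformly as $\epsilon \to 0$, so the complementary integral contributes a remainder $R_2(\epsilon)$ that is analytic in $\epsilon$ and can be read off by expanding the denominator in a Taylor series around $\epsilon = 0$ on that region.

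Next, in $K_{\rm loc}$ I would substitute $y = \sqrt{\epsilon}\,u$, so that the measure produces $\epsilon^{n/2}$ and the denominator becomes $\epsilon \bigl(1 + Q(u) + \sqrt{\epsilon}\,A(u,\sqrt{\epsilon})\bigr)$, where $A$ is polynomial in its second argument and collects all higher-order contributions from the $O(y^3)$ terms. This already exposes the predicted prefactor $\epsilon^{n/2-1}$. A linear change of variables diagonalizing $Q$ then introduces the Jacobian $1/\sqrt{\det Q}$ and reduces the leading integral to $\int d^n u/(1+u^2)$, which in spherical coordinates factorizes as the surface area $V_{n-1}(1)$ of $S^{n-1}$ times a radial integral, evaluated by analytic continuation in $n$ in the same dimensional-regularization sense used elsewhere in the paper.

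For the correction $R_1(\epsilon)$, I would expand the denominator as a geometric series in $\sqrt{\epsilon}\,A$ and integrate term by term. Terms odd in $u$ vanish after angular integration on $S^{n-1}$ by the rotational symmetry of the reduced integrand, so the first surviving correction arises at order $(\sqrt{\epsilon}\,A)^2$ and produces $R_1(\epsilon) = O(\epsilon)$ as claimed, with all subsequent terms analytic in $\epsilon$.

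The main technical obstacle is that for physical kinematics $Q$ has Lorentzian signature, so the contour $\mathbb{R}^n$ meets the real singular locus $1 + Q(u) = 0$ and the integrals above are not literally absolutely convergent. This is resolved by the Feynman $i0$ prescription and the complex-analytic framework of Section~\ref{section:prelim}: one works with the deformed cycle described there, which avoids the hyperboloid singularity and on which both the rescaling and the term-by-term expansion become rigorous. The same prescription is what endows the radial integral with a well-defined analytic continuation in $n$, pinning down the coefficient $V_{n-1}(1)/\sqrt{\det Q}$ of the non-analytic power $\epsilon^{-1+n/2}$.
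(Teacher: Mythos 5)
The paper gives no proof of this lemma at all: both Lemma~\ref{thm:Lem1} and Lemma~\ref{thm:Lem2} are simply stated (the former explicitly labeled ``standard'') and then used to deduce Theorem~\ref{thm:Th3}. So there is no written argument to compare yours against line by line; what can be said is that your route is the natural one, and it works. The cutoff near $y=0$, the rescaling $y=\sqrt\epsilon\,u$ that peels off $\epsilon^{n/2-1}$, the diagonalization of the Hessian producing $1/\sqrt{\det Q}$, the spherical factorization into $V_{n-1}(1)$ times a radial integral treated by analytic continuation in $n$, and the parity argument killing the $\sqrt\epsilon$ term so that $R_1=O(\epsilon)$, are exactly the right steps. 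Your final remark about the Lorentzian signature of $Q$ and the need for the $i0$-deformed contour addresses a genuine convergence subtlety that the lemma statement tacitly suppresses.

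One point worth flagging: carried to completion, your calculation does \emph{not} literally reproduce the coefficient $V_{n-1}(1)/\sqrt{\det Q}$. After diagonalizing, $\int d^n v/(1+|v|^2)$ factors as the surface measure of $S^{n-1}$ times the radial integral $\int_0^\infty r^{n-1}(1+r^2)^{-1}\,dr$, which by analytic continuation in $n$ equals $\tfrac12 B\bigl(n/2,\,1-n/2\bigr)=\pi/\bigl(2\sin(\pi n/2)\bigr)$, so the correct prefactor is $\pi^{n/2}\,\Gamma(1-n/2)/\sqrt{\det Q}$ rather than the sphere volume alone. Either the paper's $V_{n-1}(1)$ is shorthand that absorbs the radial factor, or the constant as printed is imprecise. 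Either way, the exponent $\epsilon^{n/2-1}$ and the structural form $A\,\epsilon^{\alpha}\bigl(1+R_1(\epsilon)\bigr)+R_2(\epsilon)$ with $R_1=O(\epsilon)$ --- which is all that Theorem~\ref{thm:Th3} actually uses --- are correct and are established by your argument.
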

Then, Theorem~\ref{thm:Th3} follows from the above two lemmas. We can also simplify the statement of the theorem to the following form:
\be
J_{\zeta} = A_{I,\zeta}(P_{\parallel})\,L_{I}^{-1 + (d - 1) |I|/2} +  R(p) ,
\label{eq:eqn_Jzeta}
\ee
where $A_{I,\zeta}$ is a function of $P_{\parallel}$, ~$P_{\parallel}$ is the coordinates on the manifold $\{(p,m): L_{I}(p,m) = 0\}$, and~$\zeta$ is a contour of integration.
If $|I|=L$ (the number of loops), then the coefficients $A_{I}$ are algebraic functions. In~general, they involve integration 
over other loop momenta. In~the next section, we write an example of such an $A_{I}$ function involving integration over other 
loop~momenta.

\begin{remark} {The} 
$o()$ term in (\ref{eq:eqn_J}) contains terms that are of a higher order in $\epsilon$ and $q_\perp^2$. 
The branching behavior of the integral comes from the region of momenta, where $q_\perp^2 \sim \epsilon$.
The proof of the theorem essentially states that if a vanishing cycle is formed by transversely intersecting $k$
hypersurfaces, then it is possible to deform the contour of integration in a way that one can take the residue in 
$k-1$ variables, and~for the rest of the variables, $q_{\perp}$, we  have a Morse singularity. This is very far from the physical situation 
when several masses are zero. The~latter case corresponds to degeneracies in the intersections and non-transverse intersections. 
In these cases, we  have a non-Morse~singularity.
\end{remark}

\begin{theorem}\label{thm:Th4}
With the conditions of the previous theorem, we have the following expansion near the generic point of the variety defined by the vanishing of the Landau polynomial $(\{L_{I}=0\})$:
\begin{gather}
I_g(p) = S_g(p_\parallel)L_{I}^{\alpha} + T_g(p_\parallel,L_{I}) ,
\end{gather}
where $\alpha$ is a constant and $S_g(p_\parallel)$ is defined in $\{L_{I}=0\}$, and~we have a regular function at the generic point.
 $T_g(p_\parallel,L_{I})$ is defined 
in a neighborhood of $\{L_{I}=0\}$ and~is regular near the same generic point. This expansion holds for each of the $g$ components 
of the bundle~\cite{Srednyak:2017ime} determined by a Feynman diagram. The~functions $S_{g}$ and $T_{g}$ can develop singularities of 
a regular type near the intersection of $\{L_{I}=0\}$ and other Landau varieties.  
\end{theorem}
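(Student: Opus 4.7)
The plan is to derive Theorem~\ref{thm:Th4} as a packaging of Theorem~\ref{thm:Th3} over the full flat vector bundle of integration cycles associated with the diagram. At a generic point $p_0$ of the Landau variety $\{L=0\}$, Theorem~\ref{thm:Th2} identifies a unique tuple $I=(i_1,\dots,i_s)$ of propagators whose tangent cones degenerate at a pinch point $q^* = Q(p_0)$, with $Q$ rational in the external data. The main observation is that $L$ itself can be taken as the local defining function for this stratum: away from intersections with other Landau varieties, the system $\partial^{\alpha}P=0$ cuts out precisely $\{L=0\}$ in a neighborhood of $p_0$, so the parameter $\epsilon$ of Theorem~\ref{thm:Th3} may be identified (up to a regular non-vanishing multiplicative factor) with $L(p)$.

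First I would fix a basis $\{\zeta_g\}$ of the fiber of the vanishing homology bundle over a simply-connected neighborhood of $p_0$ in the complement of all Landau varieties. For each basis cycle $\zeta_g$, the component $I_g(p) = \langle \omega(p),\zeta_g\rangle$ is a single scalar integral to which Theorem~\ref{thm:Th3} applies directly in the form of equation~(\ref{eq:eqn_Jzeta}). This produces the local representation
\begin{equation}
I_g(p) \;=\; A_{I,\zeta_g}(P_\parallel)\, L(p)^{\alpha} \;+\; R_g(p),\qquad \alpha = -1 + (d-1)|I|/2,
\end{equation}
with $A_{I,\zeta_g}$ the coefficient supplied by Theorem~\ref{thm:Th3} and $R_g$ its regular remainder. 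Setting $S_g := A_{I,\zeta_g}$ and $T_g := R_g$ gives the claimed decomposition for that component.

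Next I would verify the regularity assertions. Regularity of $S_g(p_\parallel)$ on $\{L=0\}$ near $p_0$ follows because, after the change of variables $z_i = L_i(x)$ used in the proof of Theorem~\ref{thm:Th3}, the coefficient $A_{I,\zeta_g}$ is the residue of a product of rational factors, evaluated at the pinch point whose coordinates are rational in $p$ by Theorem~\ref{thm:Th2}. The residual integration over the transverse variables $q_\perp$ converges at generic $p_\parallel$ because no further pinch condition is imposed on the complementary propagators $P_j$, $j\notin I$; their non-vanishing at $q_\parallel=0$ is the genericity hypothesis. Regularity of $T_g$ in a full neighborhood of $p_0$ follows from its defining representation as an integral whose integrand, after subtraction of the leading singular contribution, is uniformly bounded in a small disk around $p_0$.

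The statement about singularities at the intersection of $\{L=0\}$ with other Landau varieties then comes out essentially for free: such intersections are exactly the loci where either the pinch point $Q(p)$ collides with the zero set of some $P_j$, $j\notin I$, or where a second independent vanishing cycle forms. In the first case the residual integrand in the formula for $S_g$ acquires an additional propagator singularity, making $S_g$ singular of regular type along that secondary divisor; in the second, the full Mayer--Vietoris analysis of Theorem~\ref{thm:Th2} introduces new exponents and the bookkeeping of the $g$-components changes. The main obstacle will be checking that the monodromy of the basis $\{\zeta_g\}$ around $\{L=0\}$ is consistent with the formal power $L^\alpha$, so that the decomposition is well defined as a section of the pulled-back Gauss--Manin bundle rather than merely a local branch; this is addressed by noting that the vanishing part of the monodromy is generated by Picard--Lefschetz transformations on the single cycle produced by the pinch of Theorem~\ref{thm:Th2}, and hence acts on the leading term exactly by multiplication by $e^{2\pi i \alpha}$, matching the factor coming from $L^\alpha$.
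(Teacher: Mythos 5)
Your proof takes the same route as the paper: Theorem~\ref{thm:Th4} is read off from Theorem~\ref{thm:Th3} (in its simplified form~(\ref{eq:eqn_Jzeta})) applied componentwise to a local basis $\{\zeta_g\}$ of the vanishing homology, with the deformation parameter $\epsilon$ identified up to a nonvanishing factor with the Landau polynomial $L$. You supply substantially more detail than the paper's very terse proof---in particular the explicit regularity checks for $S_g$ and $T_g$ and the Picard--Lefschetz consistency of the leading factor $L^{\alpha}$ with the monodromy of the basis $\{\zeta_g\}$, which the paper omits---but these are elaborations of the same argument, and your $S_g := A_{I,\zeta_g}$ is exactly what the paper evaluates to $(2\pi i)^{|I|}\prod_{j\notin I} D_j(Q_\parallel)^{-1}$.
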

\begin{proof}
The proof of this corollary essentially follows from the expressions in the previous theorem and lemmas. For the~leading order, we have 
\be 
S_g = (2\pi i)^{|I|} \prod_{j \notin I} \frac{1}{D_j(Q_\parallel)} .
\label{eq:zalupa}
\ee
We see that for the leading asymptotic term, one can derive a concrete, closed-form expression, while for the regular part (as for $T_g$),
it is impossible to derive a simple expression and only an integral representation may be given. We can also normalize the Landau 
polynomial to a rational expression, and~the final form is given by
\begin{gather} 
L = 
\begin{vmatrix}
P_0(Q_\parallel) & \partial_\parallel P_0(Q_\parallel)  \\ \nonumber
... \\ \nonumber
P_k(Q_\parallel) & \partial_\parallel P_k(Q_\parallel) \nonumber \\
\end{vmatrix} ,
\end{gather}
which is a $(k+1) \times (k+1)$ determinant. $Q_\parallel$ is the restriction of $Q$ in the space orthogonal to the space generated 
by the normals to $P_i$. If~a different normalization was selected, there would be a corresponding factor in  Formula 
(\ref{eq:zalupa}).
\end{proof}

\section{Examples with Feynman Vertex Loop~Diagrams}\label{section:exam}
 
In this section, we consider examples with some Feynman vertex loop diagrams for arbitrary complex masses,
which are discussed at one- and two-loop~levels.

\subsection{Vertex Diagram at One Loop: General~Case}\label{section:vert1}

Vertex diagrams at one loop, such as the example shown in Figure~\ref{fig:1lvertex}, give rise to the 
following type of master integral:
\be 
J_{\rm 1l.v} = \int d^{d}q\,\frac{1}{\prod\limits_{i}((q + p_{i})^{2} + m_{i}^{2})} ,
\label{eq:eqn_J1a}
\ee 
where the product is over at most $d+1$ propagators because we can reduce the general situation (where there is more than $d+1$) to this case by partial fractioning. {The case 
of the QED vertex one-loop diagrams, 
	shown in Figure~\ref{fig:vertex1}, corresponds to a particular choice of masses  with $(0,m_{l},m_{l})$, where $m_{l}$ is the lepton 
	mass. It matches up to a singular stratum in the diagram with generic masses, e.g.,~in the generalized diagram of Figure~\ref{fig:1lvertex}, 
	there is a singularity at $m_{3}=m_{2}=m_{l}$. In~the corresponding QED diagram (the right panel of Figure~\ref{fig:vertex1} in 
	this case), we first restrict to the stratum $m_{3}=m_{2}=m_{l}$ and~then perform analytic continuation}.
\begin{figure}[h!]
\begin{center}
\includegraphics[width=5.cm]{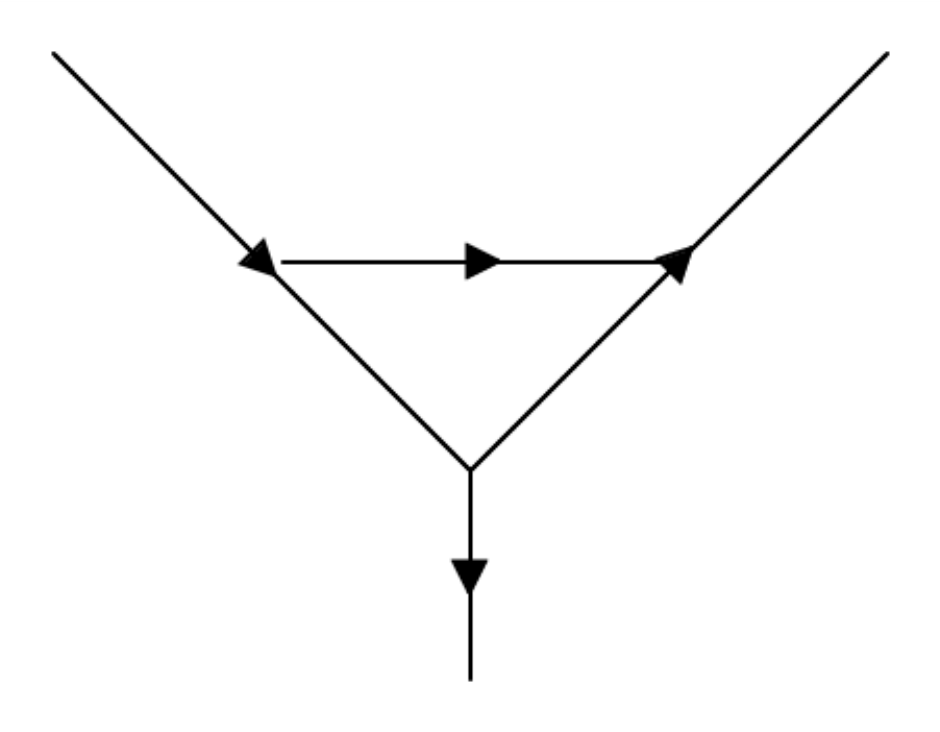}
\caption{The upper 
 vertex of the diagram depicted in the right panel of Figure~\ref{fig:vertex1} but with 
arbitrary complex masses, $m_{i}$, considered. The arrows indicate the flow of momenta.}
\label{fig:1lvertex}
\end{center}
\end{figure}

\begin{figure}[h!]
\includegraphics[width=0.85\textwidth]{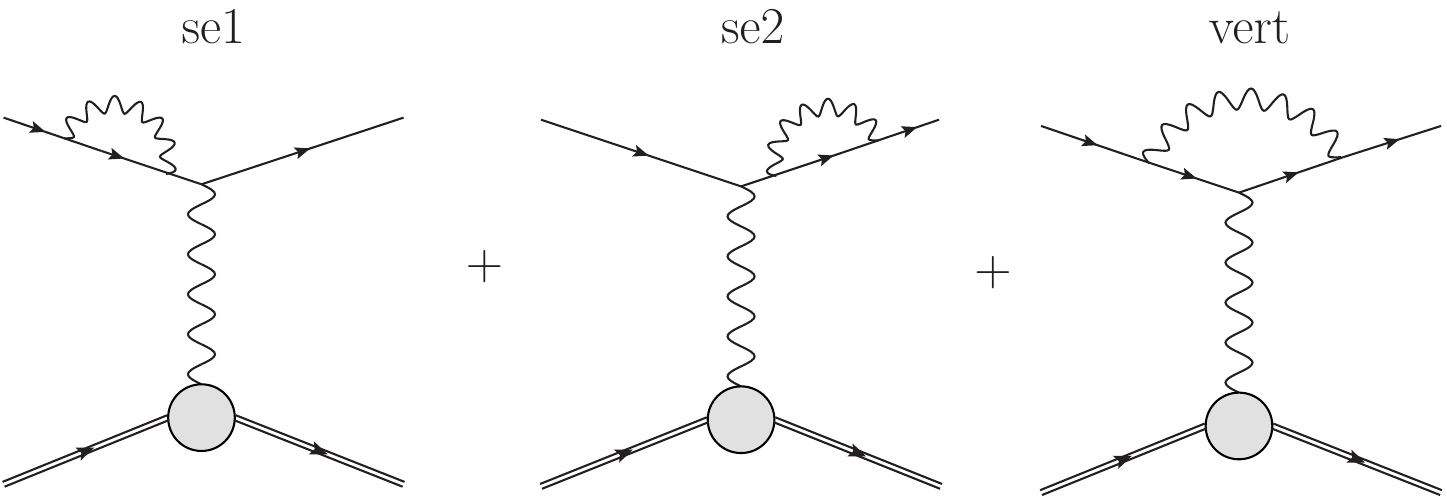}
  \caption{Feynman 
 diagrams describing the QED one-loop corrections at the lepton line in $l-p$ scattering. The arrows indicate the flow of momenta.
  This figure is from Ref.~\cite{Bucoveanu:2018soy} and~is used with the kind permission of \emph{The European Physical Journal A}.}
  \label{fig:vertex1}
\end{figure}%
In this case, the~vanishing cycles arise in any number of propagators from $2$ to $d$. For~definiteness, let us consider 
the propagators $D_{0},...,D_{k}$ given by
\begin{equation}
\begin{gathered}
q^{2} + m_{0}^{2} = 0 , \\ 
(q + p_{i})^{2} + m_{i}^{2} = 0 ,~~i=1 ,..., k .
\end{gathered}
\end{equation}
The variety corresponding to this ideal is equivalent to the one generated by 
\begin{equation}
\begin{gathered}
  q^{2} + m_{0}^{2} = 0 , \\ 
  2q p_{i} + p_{i}^{2} + m_{i}^{2} - m_{0}^{2} = 0 ,~~i=1,...,k ,
\end{gathered}
\end{equation}
or 
\begin{equation}
\begin{gathered}
  q_{\parallel}^{2} + q_{\perp}^{2} + m_{0}^{2} =0 , \\
  q_{\parallel} = \sum\limits_{i} a_{i} p_{i} , \\
  2\sum\limits_{j} (p_{i}p_{j})a_{j} + p_{i}^{2} + m_{i}^{2} - m_{0}^{2} = 0 ,
\label{eq:eqn_Gram}
\end{gathered}
\end{equation}
where $q_{\parallel}$ denotes the component of $q$ that lies in the space spanned by $p_{i}$,
and $q_{\perp}$ is the component of $q$ lying in a subspace orthogonal to the space spanned by $p_{i}$. 

If the Gram determinant of the system (\ref{eq:eqn_Gram}) is $G(p_{i},p_{j}) = 0$, this corresponds to 
a vanishing cycle at infinity (the stratum $\mathbb{CP}^{d-1}$ in the compactification of the $q$-space). 
We consider the case when $G(p_{i},p_{j})\neq 0$. Then, $q_\parallel$ is uniquely determined, and~the 
equation for the singularity is the following:
\begin{gather}
  \begin{vmatrix}
    p_{i}p_{j} & -(p_{i}^{2} + m_{i}^{2} - m_{0}^{2})/2 \\
    -(p_{j}^{2} + m_{j}^{2} - m_{0}^{2})/2 & m_{0}^{2} 
  \end{vmatrix}
= 0 .
\end{gather}
where the determinant has a size of $(d+1) \times (d+1)$. We can now rewrite the integral in~(\ref{eq:eqn_J1a}) as follows:
\be 
J_{\rm 1l.v} = \int d^{d-k} q_{\perp} \int d^{k}q_{\parallel}\,f(q)\,\frac{1}{D_0} 
\prod\limits_{i} \frac{1}{D_{0} + 2p_{i}q_{\parallel} + p_{i}^{2} + m_{i}^{2} - m_{0}^{2}} ,
\label{eq:eqn_J1b}
\ee 
where
\be 
f(q) \equiv f(q_\perp,q_\parallel) = \prod_{j \notin I} \frac{1}{(q + p_{j})^{2} + m_{j}^{2}} ,
\ee 
\be 
D_{0} = q_{\perp}^{2} + q_{\parallel}^{2} + m_{0}^{2} , 
\label{eq:D0}
\ee 
and $f(q)$ is a smooth function equal to the product of the propagators, which are not part of the pinch ($I$ is a 
 set of propagators participating in the pinch).
We then introduce \mbox{the notation} 
\be 
Q_{\parallel} = \sum\limits_{i} A_{i} p_{i} , 
\ee 
where $A_i$ are the solutions of the following equation (the third one of the system (\ref{eq:eqn_Gram})): 
\be 
2\sum\limits_{j} a_{j} (p_{i} p_{j}) + p_{i}^{2} + m_{i}^{2} - m_{0}^{2} = 0\ .
\ee 
This is a point to which the vanishing cycle becomes contractible (the pinch point in our terminology).
We also need to perform the following change in variables:
\be 
q_{\parallel} = Q_{\parallel} + s . 
\ee 
In all these variables, the~integral in (\ref{eq:eqn_J1b}) transforms into
\bea
& & \!\!\!\!\!\!\!\!\!\!\!\!\!\!\!\!\!\!\!\!\!\!\!\!\!
J_{\rm 1l.v} = \int d^{d-k}q_{\perp} \times
\nonumber \\
& & \times \int d^{k}s \prod_{i \in I} \frac{1}{ q_{\perp}^{2} + 
(Q_{\parallel} + p_{i})^{2} + m_{i}^{2} + 2(Q_{\parallel} + p_{i})s + s^{2}} \prod_{j \notin I} \frac{1}{D_{j}} ,
\label{eq:eqn_J1c}
\eea
which can be reduced to 
\be 
J_{\rm 1l.v} = (2\pi i)^{k} \int d^{d-k}q_{\perp} \prod_{j \notin I } \frac{1}{D_{j}(Q_{\parallel},q_{\perp})}
\frac{1}{M_{I}} + R(L_{I}) ,
\label{eq:eqn_J1d}
\ee 
where $R(L_{I})$ is a regular function near the Landau variety $L_{I}$. In~(\ref{eq:eqn_J1d}), we have 
\begin{gather}
M_{I} =
  \begin{vmatrix}
2Q_0 & B_{0} \\ 
.... \\
2(Q_{k} + p_{k}) & B_{k} 
  \end{vmatrix}
  ,
\end{gather}
with
\be 
B_{i} = q_{\perp}^{2} + (Q_{\parallel} + p_{i})^{2} + m_{i}^{2} .
\ee
Note that 
\be 
M_{I} = 2^{k} [p_{1},...,p_{k}]q_\perp^{2} + \psi ,
\label{eq:eqn_MI}
\ee 
where $[p_1,...,p_k]$ is the $k$-volume spanned by the vectors $p_1,...,p_k$,
and 
\begin{gather}
\psi =
  \begin{vmatrix}
2Q_0 & Q_\parallel^{2} + m_{0}^{2} \\ 
.... \\
2(Q_{k} + p_{k}) & (Q_\parallel + p_{k})^{2} + m_{k}^{2}
  \end{vmatrix}
.
\end{gather}
One can 
also indicate that the first $(1, ..., k)$ are coordinates on the space spanned by $p_{i},~i\in [1,...,k]$. The quantity $\psi$ vanishes identically on $L_{I}$. In~fact, it is proportional to the polynomial $L_{I}$.
At the end, we wish to emphasize that the integral in (\ref{eq:eqn_J1d}) has the following \mbox{asymptotic~expansion:}
{
\makeatletter\def\f@size{10.0}\check@mathfonts
\def\maketag@@@#1{\hbox{\m@th\normalsize\normalfont#1}}%
\bea
& & 
\!\!\!\!\!\!\!\!\!\!\!\!\!\!\!\!\!\!\!\!\!\!\!\!\!\!
J_{\rm 1l.v,asym} = 
\nonumber \\
& & \!\!\!\!\!\!\!\!\!\!\!\!\!\!\!\!\!\!\!\!\!\!\!\!\!
= (2\pi i)^{k}\,V_{k-1}(1)(2^{k} [p_{1},...,p_{k}])^{(d-k)/2}\,\psi^{-1 + (d - k)/2} \biggl( \prod_{j \notin I} \frac{1}{D_{j}(Q_{\parallel},0)} 
+ O(\psi) \biggr) + R(L_{I}) ,
\label{eq:eqn_J1e}
\eea
}
where $V_{k-1}(1)$ is the volume dependent on $k-1$ dimensionality. $D_j$ is just the propagator evaluated at $Q_{\parallel}$.


\smallskip
\subsection{Propagator at Two Loops: General~Case}\label{section:propag}

Herein, we begin with the integral of the two-loop propagator, the~diagram of which is shown in Figure~\ref{fig:loop}:
\bea
& & 
\!\!\!\!\!\!\!\!\!\!\!\!\!\!\!\!\!\!\!\!\!
J_{\rm 2l.p} = \int d^dq_{1} d^dq_{2} \times
\nonumber \\
& & \!\!
\times \frac{1}{q_{1}^{2} + m_{1}^{2}}\frac{1}{(q_{1} + p)^{2} + m_{2}^{2}}
\frac{1}{(q_{1} + q_{2})^{2} + m_{3}^{2}}\frac{1}{q_{2}^{2} + m_{4}^{2}}\frac{1}{(q_{2} - p)^{2} + m_{5}^{2}} .
\label{eq:eqn_Jtwo}
\eea
\begin{figure}[h!]
\begin{center}
\includegraphics[width=5.0cm]{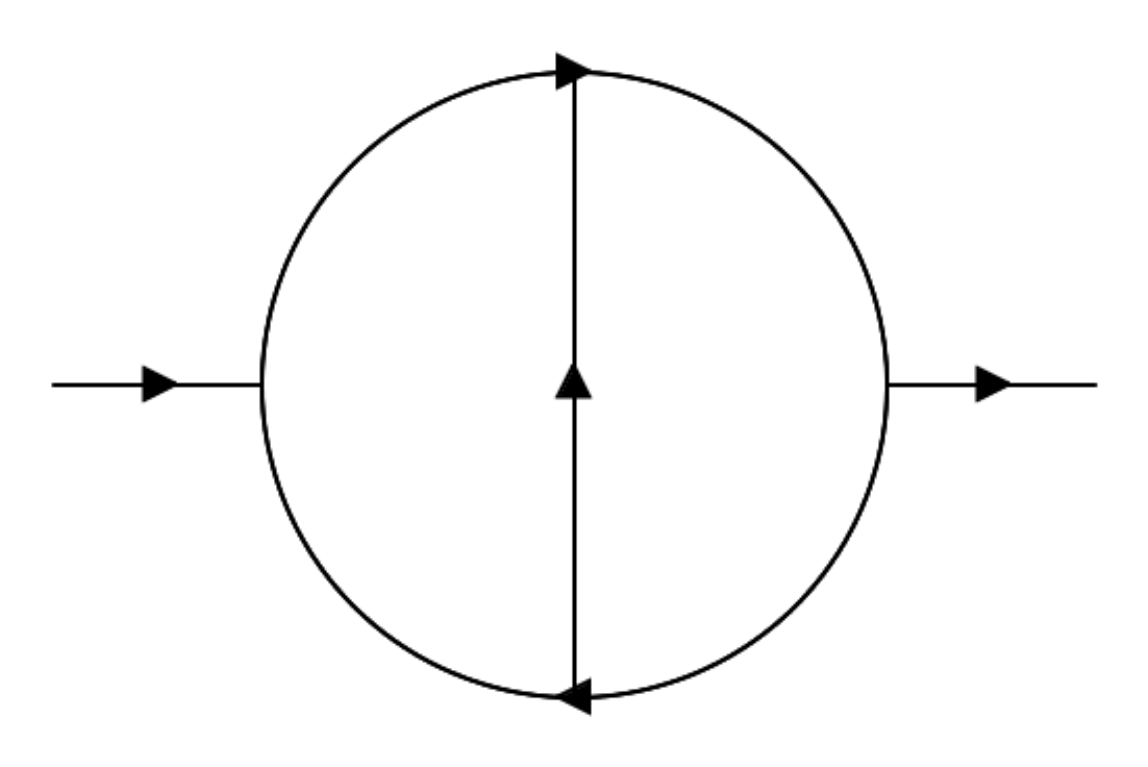}
\caption{A diagram describing the two-loop propagator but with arbitrary complex masses, $m_{i}$, considered.}
\label{fig:loop}
\end{center}
\end{figure}

\hskip -0.6truecm
We will focus on the vanishing cycle in
\begin{equation}
\begin{gathered}
q_{1}^{2} + m_{1}^{2} = 0 , \\
(q_{1} + p)^{2} + m_{2}^{2} = 0 .
\label{eq:eqn_vanish1}
\end{gathered}
\end{equation}
Here, the pinch point is
\be
q^{\prime}_{1} = \alpha p ,
\label{eq:eqn_pinch2}
\ee
which is a solution of the system (\ref{eq:eqn_vanish1}) for some scalar $\alpha$ and~where the singularity locus is at
\be
\sqrt{-p^2} = \pm m_{1} \pm m_{2} ,
\label{eq:eqn_loc2}
\ee
\be 
\alpha = -\frac{p^{2} + m_{2}^{2} - m_{1}^2}{2p^2} .
\ee 
The integral coefficient in the series expansion (explained in Theorem~\ref{thm:Th3}) is expressed~as 
\be
I_{\rm 2l.p} = \int d^{d}q\,\frac{1}{(Q_{1} + q_{2})^{2} + m_{3}^{2}}\frac{1}{q_{2}^{2} + m_{4}^2}\frac{1}{(q_2-p)^{2} + m_{5}^{2}} ,
\label{eq:eqn_coef2}
\ee
and this integral coefficient is a function of masses and dimension only, with~$Q_{1}$ being one of the pinch points.
In this case, the~quadratic form and $\epsilon$-term (see Theorem~\ref{thm:Th3} again) 
are given by
\begingroup
\makeatletter\def\f@size{10.5}\check@mathfonts
\def\maketag@@@#1{\hbox{\m@th\normalsize\normalfont#1}}%
\bea
&& \!\!\!\!\!\!\!\!\!\!\!\!\!\!\!
L^{\prime}_{0} + Q^{\prime}_{0} =
\nonumber \\
&& = 
-2\alpha \sqrt{-p^2} \bigl( (\alpha + 1)^{2}p^{2} + q_{\perp}^{2} + m_{2}^{2} \bigr) +2(\alpha + 1)\sqrt{-p^{2}}
(\alpha^{2}p^{2} + q_{\perp}^{2} + m_{1}^{2}) =
\nonumber \\
&& =
2q_{\perp}^{2} + L^{\prime}_{0} ,
\label{eq:eqn_quad2}
\eea
\endgroup
where $L^{\prime}_{0}$ is the Landau polynomial~\cite{landau1960analytic}, which is obtained by eliminating $\alpha$ from the system~(\ref{eq:eqn_vanish1}):
\be
L^{\prime}_{0} = -2\alpha \sqrt{-p^{2}} \bigl( (\alpha + 1)^{2} p^{2} + m_{2}^{2} \bigr) + 2(\alpha + 1)\sqrt{-p^{2}}(\alpha^{2} p^{2} + m_{1}^{2}) ,
\label{eq:eqn_L02}
\ee
where $\alpha = \alpha(p,m)$ in Equations~(\ref{eq:eqn_quad2}) and~(\ref{eq:eqn_L02}) are certain algebraic expressions deduced as a result of the above-mentioned elimination.
The full expansion has the following~form:
\be
F_{\rm 2l.p} = (2\pi i)^2 I_{\rm 2l.p} \left[L^{\prime}_{0}/2 \right]^{-1 + (d - 1)/2}(1 + O(L^{\prime}_{0})) + R(p) ,
\label{eq:eqn_full2}
\ee

\bigskip
\subsection{Vertex Diagram at Two~Loops}\label{section:vert2}

\subsubsection{General~Case}\label{section:vert2a}

Vertex diagrams at two loops, such as the example shown in Figure~\ref{fig:2lvertex}, 
have the following master integral form: 
\bea
&& \!\!\!\!\!\!\!\!\!\!\!\!\!\!\!\!\!\!\!
J_{\rm 2l.v} = \int d^{d}q_{1} d^{d}q_{2}\,\frac{1}{(q_{1}^{2} + m_{1}^{2})((q_{1} + p_{1})^{2} + m_{2}^{2})(q_{2}^{2} + m_{3}^{2})} \times 
\nonumber \\ 
&& \!\!
\times \frac{1}{((q_{2} + p_{2})^{2} + m_{4}^{2})((q_{1} + q_{2} + p_{1})^{2} + m_{5}^{2})((q_{1} + q_{2} + p_{2})^{2} + m_{6}^{2})} ,
\eea
such that there are six given propagators in the integrand's denominator:
\be
J_{\rm 2l.v} = \int d^{d}q_{1} d^{d}q_{2}\,\frac{1}{D_{1}D_{2}D_{3}D_{4}D_{5}D_{6}} .
\ee

The case of 
QED-vertex two-loop 
	diagrams, shown in Figure~\ref{fig:vertex2}, corresponds to a particular choice of masses with $(0,0,m_{l},m_{l})$. It matches 
	up to a singular stratum in a diagram with generic masses, e.g, in~the generalized diagram of Figure~\ref{fig:2lvertex}, 
	there is a singularity at $m_{4}=m_{3}=m_{2}=m_{1}$. In~the corresponding QED diagram (Figure~\ref{fig:vertex2}b 
	in this case), we first restrict to the stratum $m_{4}=m_{3}=m_{2}=m_{1}$ and~then perform analytic continuation.

\begin{figure}[h!]
\begin{center}
\includegraphics[angle=0,width=6.5cm]{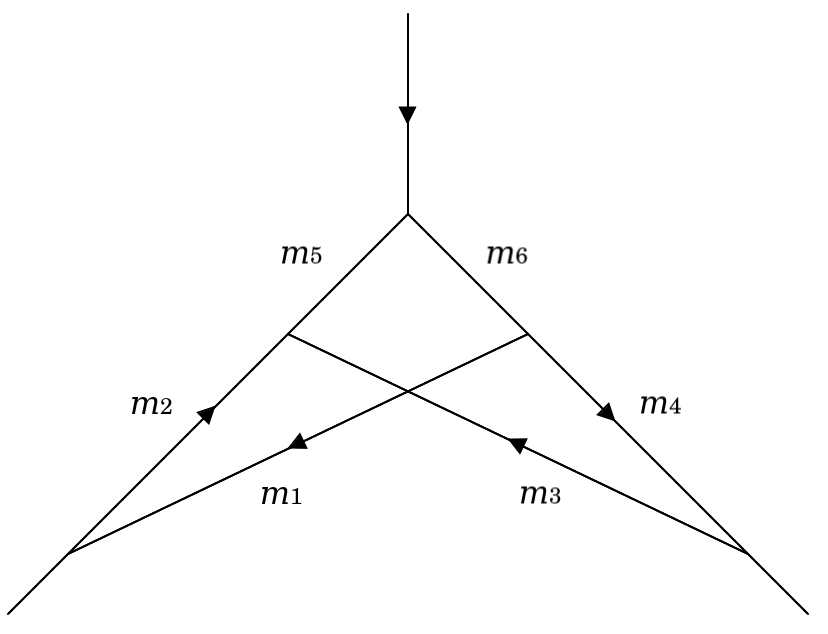}
\caption{The vertex diagram shown in Figure~\ref{fig:vertex2}b but with arbitrary complex masses considered.}
\label{fig:2lvertex}
\end{center}
\end{figure}

\vspace{-15pt}
\begin{figure}[h!]
\vspace{-0.25cm}
\includegraphics[width=0.95\textwidth]{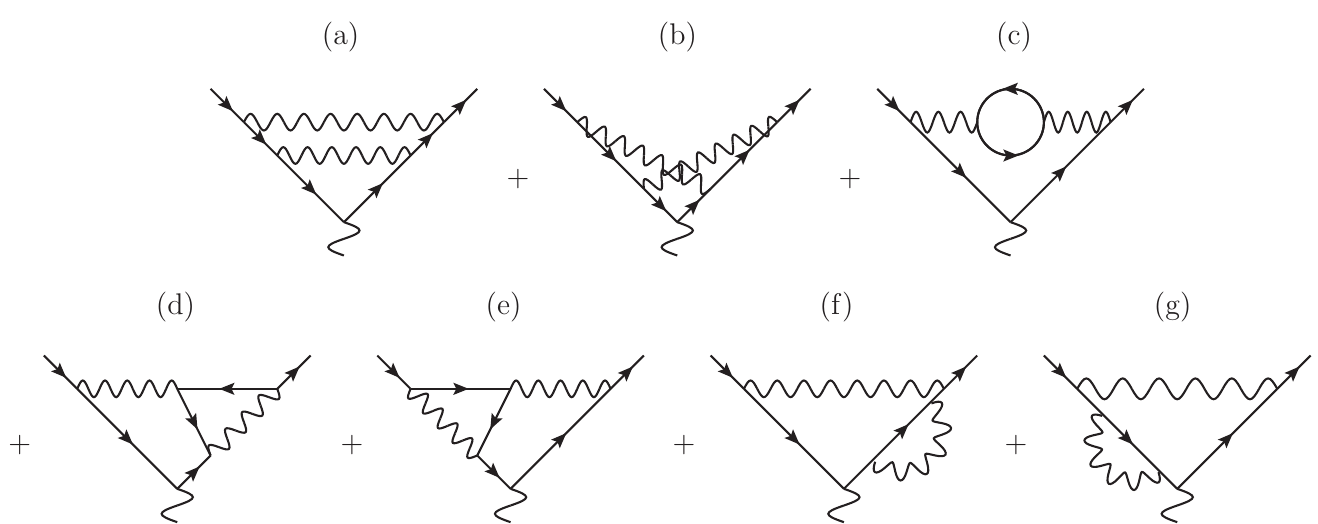}
  \caption{Feynman 
 diagrams describing  QED two-loop corrections at the lepton line in $l-p$ scattering. The diagrams in the subfigures (a)-(g) constitute complete list of two-loop corrections to the QED vertex, 
 and are important in the scattering theory because those constitute the sef of model independent corrections. We refer to Ref.~\cite{Bucoveanu:2018soy} for more details. This figure is from
 Ref.~\cite{Bucoveanu:2018soy} and~is used with the kind permission of \emph{The European Physical Journal A}.}
  \label{fig:vertex2}
\end{figure}%

Let us now concisely discuss the two-pinch singularity of the two-loop topology with the crossed vertex. We  focus on 
the vanishing cycle by considering the two-pinch with the following propagators: 
\begin{equation}
\begin{gathered}
q_{1}^{2} + m_{1}^{2} = 0 , \\ 
(q_{1} + p_{1})^{2} + m_{2}^{2} = 0 .
\label{eq:eqn_vanish_2l_a}
\end{gathered}
\end{equation}
The pinch happens at 
\be 
q_{1}^{\prime\prime} = \alpha p_{1} ,
\label{eq:eqn_pinch_2l_a}
\ee 
for which we make the change in variables of 
\be 
q_{1}^{\prime\prime} = \alpha p_{1} + q_{\parallel} + q_{\perp} .
\ee 
Then, we have 
\be 
J_{\rm 2l.v} = \int d^dq_2 \prod_{j=3}^6
 \frac{1}{D_j} \int dq_\parallel d^{d-1} q_{\perp}\,\frac{1}{D_1 D_2} ,
\ee
or
\bea
& & 
\!\!\!\!\!\!\!\!\!\!\!\!\!\!\!\!\!\!\!\!
J_{\rm 2l.v} = \int d^dq_2 \prod_{j=3}^6 \frac{1}{D_j} \int dq_{\parallel} d^{d-1} q_{\perp}\,
\frac{1}{(\alpha^{2} p_{1}^{2} + m_{1}^{2}) + 2 \alpha p_{1} q_{\parallel} + q_{\parallel}^{2} + q_{\perp}^{2}} \times
\nonumber \\
& & ~~~~~~~~~~~~~~~~~~~
\times \frac{1}{\bigl( (\alpha + 1)^{2} p_{1}^{2} + m_{2}^{2} \bigr) + 2 (\alpha + 1) p_{1} q_{\parallel} 
+ q_{\parallel}^{2} + q_{\perp}^{2}} ,
\eea
which reduces to
{
\makeatletter\def\f@size{10.5}\check@mathfonts
\def\maketag@@@#1{\hbox{\m@th\normalsize\normalfont#1}}%
\be 
J_{\rm 2l.v} = (2\pi i)^2\int d^dq_{2} \prod_{j=3}^6 \frac{1}{D_{j}} \int  d^{d-1} q_{\perp}\,\frac{1}{ 2\sqrt{p_{1}^{2}}} 
\frac{1}{q_{\perp}^{2} + L_{I}} (1 + O(L_{I})) + R(L_{I}) ,
\label{eq:eqn_2loop_final}
\ee
}
where 
\be 
L_{I} = (\alpha + 1)(\alpha^{2}p_{1}^{2} + m_{1}^{2}) - \alpha \bigl( (\alpha + 1)^{2}p_{1}^{2} + m_{2}^{2} \bigr) ,
\label{eq:eqn_Land}
\ee
that is, the corresponding Landau polynomial. The~asymptotic solution corresponding to~(\ref{eq:eqn_2loop_final})
is
{
\makeatletter\def\f@size{10.5}\check@mathfonts
\def\maketag@@@#1{\hbox{\m@th\normalsize\normalfont#1}}%
\be 
J_{\rm 2l.v,asym.} = (2\pi i)^2 \int d^dq_{2} \prod_{j=3}^6 \frac{1}{D_{j}} \frac{1}{ 2\sqrt{p_{1}^{2}}} V_{d-2}(1) L_{I}^{(d-3)/2}(1 + O(L_{I})) + R(L_{I}) ,
\ee
}
which is similar to the formula (\ref{eq:eqn_full2}).

\smallskip
\subsubsection{Five-Pinch Point for Arbitrary Mass~Vertex}\label{section:vert2b}

In an asymptotic analysis of the arbitrary-mass two-loop ladder crossed vertex as shown in Figure~\ref{fig:2lvertex}, 
the geometry of the Landau varieties simplifies, such that it is possible to carry out the analysis of the five-pinch point explicitly. 
In this case, we consider the five-pinch point by the following propagators:
\begin{equation}
\begin{gathered}
q_{1}^{2} + m_{1}^{2} = 0 , \\
(q_{1} + p_{1})^{2} + m_{2}^{2} = 0 , \\
q_{2}^{2} + m_{3}^{2} = 0 , \\
(q_{2} + p_{2})^{2} + m_{4}^{2} = 0 , \\
(q_{1} + q_{2} + p_{1})^{2} + m_{5}^{2} = 0 .
\label{eq:eqn_5pinch_a}
\end{gathered}
\end{equation}
We have the pinch condition of 
\be
q_{i} = \alpha_{i}\,p_{1} + \beta_{i}\,p_{2} .
\label{eq:eqn_5pinch_b}
\ee
For example, (\ref{eq:eqn_5pinch_b}) and the first two formulas of (\ref{eq:eqn_5pinch_a}),
\begin{equation}
\begin{gathered}
q_{1} = \alpha_{1} p_{1} + \beta_{1} p_{2} , \\ 
2q_{1}\,p_{1} + p_{1}^{2} + m_{2}^{2} - m_{1}^{2} = 0 , \\
q_{1}^{2} + m_{1}^{2} = 0 .
\label{eq:eqn_5pinch_c}
\end{gathered}
\end{equation}
allow us to determine $\alpha_{1}$ and $\beta_{1}$. Analogously, (\ref{eq:eqn_5pinch_b}) and the third and fourth formulas of~(\ref{eq:eqn_5pinch_a}), \vspace{6pt}
\begin{equation}
\begin{gathered}
q_{2} = \alpha_{2} p_{1} + \beta_{2} p_{2} , \\ 
2q_{2}\,p_{2} + p_{2}^{2} + m_{4}^{2} - m_{3}^{2} = 0 , \\
q_{2}^{2} + m_{3}^{2} = 0 .
\label{eq:eqn_5pinch_d}
\end{gathered}
\end{equation}
allow to determine $\alpha_{2}$ and $\beta_{2}$.

The asymptotic analysis for the arbitrary mass vertex proceeds as follows. Near~the pinch points $(Q_{1}$ and $Q_{2})$, 
we have the following integral: 
\begingroup
\makeatletter\def\f@size{9.5}\check@mathfonts
\def\maketag@@@#1{\hbox{\m@th\normalsize\normalfont#1}}%
\bea
& & \!\!\!\!\!\!\!\!\!\!
J_{\rm 5pinch,asym.} = \int d^{d-2}q_{1,\perp} d^{d-2}q_{2,\perp} \int d^{2}q_{1,\parallel} d^{2}q_{2,\parallel}\,
\frac{1}{Q_{1}^{2} + m_{1}^{2} + 2Q_{1}\,q_{1,\parallel} + q_{1,\parallel}^{2} + q_{1,\perp}^{2}} 
\nonumber \\
& & \!\!\!\!\!\!\!\!\!\!
\times \frac{1}{(Q_{1} + Q_{2} + p_{1})^{2} + m_5^2 + 2(Q_{1} + Q_{2} + p_{1})(q_{1,\parallel} + q_{2,\parallel}) + (q_{1,\parallel} + 
q_{2,\parallel})^{2} + (q_{1,\perp} + q_{2,\perp})^{2}}
\nonumber \\
& & \!\!\!\!\!\!\!\!\!\!
\times \frac{1}{(Q_{1} + Q_{2} + p_{2})^{2} + m_6^2 + 2(Q_{1} + Q_{2} + p_{2})(q_{1,\parallel} + q_{2,\parallel}) + (q_{1,\parallel} + 
q_{2,\parallel})^{2} + (q_{1,\perp} + q_{2,\perp})^{2}}
\nonumber \\
& & ~~~~~~~~~~~~
\times \frac{1}{(Q_{1} + p_{1})^{2} + m_{2}^{2} + 2(Q_{1} + p_{1})q_{1,\parallel} + q_{1,\parallel}^{2} + q_{1,\perp}^{2}}
\nonumber \\
& & ~~~~~~~~~~~~~~~~~~~~
\times\frac{1}{Q_{2}^{2} + m_{3}^{2} + 2Q_{2}\,q_{2,\parallel} + q_{2,\parallel}^{2} + q_{2,\perp}^{2}}
\nonumber \\
& & ~~~~~~~~~~~~~
\times \frac{1}{(Q_{2} + p_{2})^{2} + m_{4}^{2} + 2(Q_{2} + p_{2})q_{2,\parallel} + q_{2,\parallel}^{2} + q_{2,\perp}^{2}} ,
\label{eq:eqn_5pinch_e}
\eea
\endgroup
which can be rewritten as 
\begingroup
\makeatletter\def\f@size{9.0}\check@mathfonts
\def\maketag@@@#1{\hbox{\m@th\normalsize\normalfont#1}}%
\begin{equation}
\begin{gathered}
J_{\rm 5pinch,asym.} = (2\pi i)^{4} \int d^{d-2}q_{1,\perp} d^{d-2}q_{2,\perp} \frac{1}{D_6}\times \\ 
\begin{vmatrix}
Q_{1} & 0 & Q_{1}^{2} + m_{1}^{2} + q_{1,\parallel}^{2} + q_{1,\perp}^{2}  \\
Q_{1} + p_{1} & 0  & (Q_{1} + p_{1})^{2} + m_{2}^{2} + q_{1,\parallel}^{2} + q_{1,\perp}^{2} \\ 
0 & Q_{2} & Q_{2}^{2} + m_{3}^{2} + q_{2,\parallel}^{2} + q_{2,\perp}^{2} \\
0 & Q_{2} + p_{2} & (Q_{2} + p_{2})^{2} + m_{4}^{2} + q_{2,\parallel}^{2} + q_{2,\perp}^{2} \\
Q_{1} + Q_{2} + p_{1} & Q_{1} + Q_{2} + p_{1} & (Q_{1} + Q_{2} + p_{1})^{2} + m_5^2 + (q_{1,\parallel} + 
q_{2,\parallel})^{2} + (q_{1,\perp} +q_{2,\perp})^{2}
\end{vmatrix}^{-1}
\label{eq:eqn_5pinch_f}
\end{gathered}
\end{equation}
\endgroup
where $q_{1,\parallel}(q_{1,\perp},q_{2,\perp})$ and $q_{2,\parallel}(q_{1,\perp},q_{2,\perp})$ are $q_\perp$-dependent residues. In general, 
the~subscripts $v_{1,2\parallel}$ and $v_{1,2\perp}$ in our formulas mean components of the orthogonal decomposition of the vector 
$v$ with respect to the space spanned by $p_{1,2}$, respectively.

In this expression, a~determinant with vector entries has the following meaning. The~vectors $Q_{1,2}$ are essentially two-dimensional (due to Landau conditions, 
they are always a linear combination of $p_{1,2}$). Then, a column in which the combination of $Q_i$ and $p_i$ occurs is meant to be split in two columns containing 
the coefficients in front of $p_1$ and $p_2$, respectively. The~general formula is in the Theorem~3.

For the leading order in the expansion parameter $L_I$ (the formula defining the Landau polynomial $L_{I}$ in (\ref{eq:eqn_5pinch_h})), ~Equation~(\ref{eq:eqn_5pinch_f}) 
reduces to 
\begingroup
\makeatletter\def\f@size{9.5}\check@mathfonts
\def\maketag@@@#1{\hbox{\m@th\normalsize\normalfont#1}}%
\begin{equation}
\begin{gathered}
J_{\rm 5pinch,asym.} = (2\pi i)^4 \int d^{d-2}q_{1,\perp} d^{d-2}q_{2,\perp} \times \\ 
\times
\begin{vmatrix}
Q_{1} & 0 & Q_{1}^{2} + m_{1}^{2} + q_{1,\perp}^{2}  \\
Q_{1} + p_{1} & 0  & (Q_{1} + p_{1})^{2} + m_{2}^{2} + q_{1,\perp}^{2} \\ 
0 & Q_{2} & Q_{2}^{2} + m_{3}^{2} + q_{2,\perp}^{2} \\
0 & Q_{2} + p_{2} & (Q_{2} + p_{2})^{2} + m_{4}^{2} + q_{2,\perp}^{2} \\
Q_{1} + Q_{2} + p_{1} & Q_{1} + Q_{2} + p_{1} & (Q_{1} + Q_{2} + p_{1})^{2} + m_{5}^{2} + (q_{1,\perp} + q_{2,\perp})^{2}
\end{vmatrix}^{-1} .
\label{eq:eqn_5pinch_g}
\end{gathered}
\end{equation}
\endgroup
Note that in this case we have a determinantal formula for the Landau polynomial up to an overall factor $A$, which is a rational factor:
\begingroup
\makeatletter\def\f@size{9.5}\check@mathfonts
\def\maketag@@@#1{\hbox{\m@th\normalsize\normalfont#1}}%
\begin{gather}
L_{I} = A 
\begin{vmatrix}
Q_{1} & 0 & Q_{1}^{2} + m_{1}^{2}   \\
Q_{1} + p_{1} & 0  & (Q_{1} + p_{1})^{2} + m_{2}^{2} \\ 
0 & Q_{2} & Q_{2}^{2} + m_{3}^{2} \\
0 & Q_{2} + p_{2} & (Q_{2} + p_{2})^{2} + m_{4}^{2} \\
Q_{1} + Q_{2} + p_{1} & Q_{1} + Q_{2} + p_{1} & (Q_{1} + Q_{2} + p_{1})^{2} + m_{5}^{2}
\end{vmatrix} ,
\label{eq:eqn_5pinch_h}
\end{gather}
\endgroup
The Landau polynomial in (\ref{eq:eqn_5pinch_h}) is considerably more complicated than that 
in the case of QED, derived in Section~\ref{section:vert2d}.

The determinant in the integral of (\ref{eq:eqn_5pinch_g}) has the following explicit form:
\bea
& & 
\!\!\!\!\!\!\!\!\!\!\!\!\!\!\!\!\!
\mbox{Det} = \beta_{1} \alpha_{2}\,\, \Delta(p_1,p_2) \times
\nonumber \\
& &
\times \bigl( a_{5} - a_{2} - a_{4} - \sigma_{1} (a_{2} - a_{1}) - \rho_{1} a_{1} - \sigma_{2} (a_{4} - a_{3}) - \rho_{2} a_{3} \bigl) ,
\label{eq:eqn_5pinch_i}
\eea
where $\Delta(p_1,p_2) = (p_1^2p_2^2-(p_1p_2)^2)$, $a_{i}$ are the elements of the last column, and~the pinch points are given by
\begin{equation}
\begin{gathered}
Q_{2} = \sigma_{1} p_{1} + \rho_{1} Q_{1} , \\ 
Q_{1} = \sigma_{2} p_{2} + \rho_{2} Q_{2} .
\label{eq:eqn_5pinch_j}
\end{gathered}
\end{equation}

\smallskip
\subsubsection{QED~Case}\label{section:vert2c}

We proceed by considering the above-mentioned QED two-loop crossed vertex diagram in Figure~\ref{fig:vertex2}a as a specific 
example. This part of our analysis is a classical one, which may be compared to~\cite{sudakov1956vertex,gribov1972deep}.
In this case, the~integral of the vertex diagram is obtained from (\ref{eq:eqn_5pinch_a}) by taking the limit $m_{1} \rightarrow 0$,
$m_{3} \rightarrow 0$, $m_{2} = m_{4} = m_{5} \rightarrow m_{l}$. 
\bea
& &
J_{\rm QED:2l.v} = \int d^dq_{1} d^dq_{2}\,\frac{1}{q_{1}^{2}(q_{1} - q_{2})^{2}} \frac{1}{((p_{1} + q_{1})^{2} + m_{l}^{2})
((p_{2} + q_{1})^{2} + m_{l}^{2})} \times 
\nonumber \\
& & 
~~~~~~~~~~~~~~~~~~
\times \frac{1}{((p_{1} + q_{2})^{2} + m_{l}^{2})((p_{2} + q_{2})^{2} + m_{l}^{2})} .
\label{eq:eqn_JtwoQED}
\eea
Note that the mass limit is obtained by taking the limit on the integrand first. If~we were to take the integral 
and then take the limit, we would observe a singularity in $m_{i}-m_{j}$ on an unphysical~sheet. 

We focus on the pinch point by the two fermion propagators $(p_{1} + q_{1})^{2} + m_{l}^{2}$ and 
$(p_{2} + q_{1})^{2} + m_{l}^{2}$. Using the designation
\be 
q = p_{2} - p_{1} , 
\ee 
the pinch  occurs at 
\be 
e_{l} = \frac{q^{2}}{4} + m_{l}^{2} = 0 ,
\ee 
and at 
\be 
Q_{i} = -p_{1} - \frac{1}{2}\,q \ i = 1,2.
\label{eq:eqn_Landpol1}
\ee 
The subtlety here is that the two pinch points in each of the variables coincide with each other, and~there is 
an extra singularity coming from the photon propagator. We then use the following change in variables:
\be 
q_{i} = Q_{i} + s_{i} + q_{i,\perp} ,
\ee 
where $s_{i}$ lies in the subspace spanned by $q$. After~this change in variables, we obtain 
\bea
& &
\!\!\!\!\!\!\!\!\!\!\!\!\!\!
J_{\rm QED:2l.v} = \int ds_{1}\,dq_{1,\perp}\,ds_{2}\,dq_{2,\perp}\,\frac{1}{s_{1}^{2} + q_{1,\perp}^{2}} 
\frac{1}{(q_{1,\perp} - q_{2,\perp})^{2} + (s_{1} - s_{2})^{2}} \times
\nonumber \\
& &
~~~~~~~~~~~~~~~~~~~~~~
\times \frac{1}{(q_{1,\perp}^{2} - q s_{1} + s_{1}^{2} + e_{l})(q_{1,\perp}^{2} + q s_{1} + s_{1}^{2} + e_{l})} \times
\nonumber \\
& &
~~~~~~~~~~~~~~~~~~~~~~
\times \frac{1}{(q_{2,\perp}^{2} -q s_{2} + s_{2}^{2} + e_{l})(q_{2,\perp}^{2} + q s_{2} + s_{2}^{2} + e_{l})} ,
\eea
where one needs to compute the asymptotics at $e_{l} \rightarrow 0$. The factor 
$(p_1+p_2)^2$ comes from $q_1^2$ evaluated at $(-p_1-q/2) = -(p_1+p_2)/2$. The power counting gives $(2(d-1)-6)/2=d-4$ for $e_l$.
\begin{equation}
\begin{gathered}
J_{\rm QED:2l.v,asym.} \approx \frac{1}{q^{2}}\frac{1}{(p_{1} + p_{2})^{2}}\int \frac{d^{d - 1}q_{1,\perp}
\,d^{d-1}q_{2,\perp}}{(q_{1,\perp} - q_{2,\perp})^{2}} 
\biggl(\frac{1}{e_{l} + q_{1,\perp}^2} \frac{1}{e_{l} + q_{2,\perp}^2}\biggr) ,
\end{gathered}
\label{eq:eqn_QED_as}
\end{equation}
which eventually gives 
\be 
J_{\rm QED:2l.v,asym.} = \frac{1}{q^{2}}\frac{1}{(p_{1} + p_{2})^{2}}\,C(d)\,e_{l}^{d-4}(1 + O(e_{l})) ,
\label{eq:eqn_QED_asym}
\ee 
with $C(d)$ as a $d$-dependent~factor.

\smallskip
\subsubsection{Five-Pinch Point for QED~Vertex}\label{section:vert2d}

In an asymptotic analysis of the two-loop ladder contribution to the QED vertex with two massless lines as shown 
in Figure~\ref{fig:vertex2}b, there is a direct analogy with what is already discussed in Section~\ref{section:vert2b}.
In this case, the~five-pinch point is given by the following~propagators:
\begin{equation}
\begin{gathered}
q_{1}^{2} = 0 , \\
(q_{1} + p_{1})^{2} + m_{l}^{2} = 0 , \\ 
q_{2}^{2} = 0 , \\ 
(q_{2} + p_{2})^{2} + m_{l}^{2} = 0 , \\ 
(q_{1} + q_{2} + p_{1})^{2} + m_{l}^{2} = 0 . 
\label{eq:5pinchQED_a}
\end{gathered}
\end{equation}
The pinch condition is given by the same formula (\ref{eq:eqn_5pinch_b}):
\be
q_{i} = \alpha_{i}\,p_{1} + \beta_{i}\,p_{2} . 
\label{eq:eqn_5pinchQED_b}
\ee
Due to another condition of $q_{i}^{2} = 0$, one can write 
\be 
\beta_{i} = s_{i}\,\alpha_{i} , 
\label{eq:eqn_5pinchQED_c}
\ee 
where 
\be 
s_{1,2} = \frac{-p_{1}\,p_{2} \pm \sqrt{G(p_{1}, p_{2})}}{p_{1,2}^{2}} , 
\label{eq:eqn_5pinchQED_d}
\ee 
where $G(p_{1}, p_{2})$ is the square root of the Gram matrix of $p_{i}$ (the signs $\pm$ could be chosen independently for $s_{1,2}$). 
We can also explicitly solve for $\alpha_{i}$ as follows:
\be 
\alpha_{i} = -\frac{p_{1}^{2} + m_{l}^{2}}{2(p_{1}^{2} + 2s_{1} p_{1} p_{2})} .
\label{eq:eqn_5pinchQED_e}
\ee 
Then, the Landau polynomial for the five-pinch point is derived as follows: 
\be 
\bigl(\alpha_{1}(p_{1} + s_{1} p_{2}) + \alpha_2(p_{1} + s_{2} p_{2}) + p_{1}\bigr)^{2} + m_{l}^{2} = 0 ,
\label{eq:eqn_5pinchQED_f}
\ee 
in which there are polynomials for each choice of the signs of $s_{i}$.

The asymptotic analysis for the QED vertex holds similarly to the analysis for the arbitrary mass vertex discussed in
Section~\ref{section:vert2b}. We have the following integral near the pinch points $(Q_{1}$ and $Q_{2})$:
\begin{equation}
\begin{gathered}
    J_{\rm QED:5pinch,asym.} = (2\pi i)^{4} \int d^{d-2}q_{1,\perp} d^{d-2}q_{2,\perp} \frac{1}{(Q_{1} + Q_{2} + p_{2})^{2} + m^{2}} \times
    \\ 
    \times
\begin{vmatrix}
Q_{1} & 0 & Q_{1}^{2} + q_{1,\perp}^{2}   \\
Q_{1} + p_{1} & 0  & (Q_{1} + p_{1})^{2} + m^{2} + q_{1,\perp}^{2} \\ 
0 & Q_{2} & Q_{2}^{2} + q_{2,\perp}^{2}\\
0 & Q_{2} + p_{2} & (Q_{2} + p_{2})^{2} + m^{2} + q_{2,\perp}^{2} \\
Q_{1} + Q_{2} + p_{1} & Q_{1} + Q_{2} + p_{1} & (Q_{1} + Q_{2} + p_{1})^{2} + m^{2} + (q_{1,\perp}+q_{2,\perp})^{2}
\end{vmatrix}^{-1}
    +
    \\ 
+ O(L_{I}) .
\label{eq:eqn_5pinchQED_g}
\end{gathered}
\end{equation}
The evaluation of this determinant gives 
\bea
& & 
\!\!\!\!\!\!\!\!\!\!\!\!\!\!\!\!\!
\mbox{Det} = \beta_{1} \alpha_{2}\,\,\Delta^{2}(p_{1},p_{2}) \times
\nonumber \\
& &
\times \bigl( a_{5} - a_{2} - a_{4} - \sigma_{1} (a_{2} - a_{1}) - \rho_{1} a_{1} - \sigma_{2} (a_{4} - a_{3}) - \rho_{2} a_{3} \bigl) ,
\label{eq:eqn_5pinchQED_h}
\eea
where $a_{i}$ are the elements of the last column, and~the pinch points are given by
\begin{equation}
\begin{gathered}
Q_{2} = \sigma_{1} p_{1} + \rho_{1} Q_{1} , \\ 
Q_{1} = \sigma_{2} p_{2} + \rho_{2} Q_{2} .
\label{eq:eqn_5pinchQED_i}
\end{gathered}
\end{equation}
Ultimately, we obtain up to a factor:
\begin{equation}
\begin{gathered}
\mbox{Det} = a_{5} - a_{2} - a_{4} - \frac{\beta_{2}}{\beta_{1}} a_{1} - 
\frac{1}{\beta_{1}}(\alpha_{2} \beta_{1} - \alpha_{1} \beta_{2})(a_{2} - a_{1}) - \\
\!\!\!\!\!\!\!\!\!\!\!\!\!\!\!\!\!\!\!\!
- \frac{\alpha_{1}}{\alpha_{2}} a_{3} - \frac{1}{\alpha_{2}} (\alpha_{1} \beta_{2} - \alpha_{2} \beta_{1})(a_{4} - a_{3}) ,
\label{eq:eqn_5pinchQED_j}
\end{gathered}
\end{equation}
\be
\mbox{Det} = \epsilon_1 + Q(q_{1,\perp},q_{2,\perp}) .
\label{eq:eqn_5pinchQED_k}
\ee
In (\ref{eq:eqn_5pinchQED_k}), $\epsilon_1$ is proportional to the corresponding Landau polynomial $L_{I}$:
\be 
\epsilon_1 = BL_{I} ,
\label{eq:eqn_5pinchQED_l}
\ee 
where $B$ is some algebraic~function.

We should note that analogous expressions for the general mass case are more complicated and are much longer, which is the reason 
we do not write them after  Equation~(\ref{eq:eqn_5pinch_j}) in Section~\ref{section:vert2b}.

\section{Outlining the Method of \boldmath$\Gamma$ Series for Performing Potentially New Types of Lepton--Proton Cross-Section
Calculations in the~Future}\label{section:scat}

In this section, we concisely describe our vision for the use of the so-called $\Gamma$-series method, how it ties in with 
the vanishing-cycle analysis discussed in Sections~\ref{section:cycle} and~\ref{section:pinch},
and how to employ it for future computations of the scattering amplitudes and cross-sections with lowest- and higher-order 
radiative corrections in $l-p$ scattering processes. Some diagrams presented in Section~\ref{section:exam} are just examples 
of how we shall be treating them in our framework; nevertheless, for~the cross-section calculations, much work still needs 
to be carried out. Here, we conjecture about such a possible future developments because neither this particular section nor the 
other parts of this paper address cross-sections or any other physical observables. Additionally, our current discussion is 
somewhat~generic.

Notwithstanding that, at~this point, one can outline some features in the $\Gamma$-series method. This method is very well 
adapted to the calculation of hypergeometric (HG) Euler integrals with free polynomials in the integrand. In~this case, it 
is possible to exhibit a basis of solutions of the corresponding holonomic system given by $\Gamma$ series 
via Ore--Sato coefficients~\cite{gel1992general}: 
\be 
HG(x;L) = \sum_{\gamma \in L^{\prime}} \frac{x^{v + \gamma}}{\prod\limits_{\omega} \Gamma(v_{\omega} + \gamma_{\omega} + 1)} ,
\label{eq:eqn_Gamma}
\ee
which is associated with a lattice, $L$. In~our case, it is a lattice dual to the lattice generated by the Minkowski sum of Newton polytopes 
of the  Feynman~quadrics. 

In fact, 
	HG series have long been applied to Feynman integrals (see, e.g.,~\cite{Fleischer:2003rm,Yost:2011wk}), and~especially in recent years, this 
	approach has been revived and systematically developed~\cite{ananthanarayan2023feyngkz,delaCruz:2019skx,Feng:2019bdx,Klausen:2019hrg}.
So, in~ Formula (\ref{eq:eqn_Gamma}), $L^{\prime}$ is a lattice dual to $L$; the term 
$1\big/\bigl( \prod \Gamma(v_{\omega} + \gamma_{\omega} + 1 \bigr)$ is the Ore--Sato coefficients; $v$ is the so-called initial 
monomials; $v_\omega$ is specific to an HG system and~in physical applications is a function of the regulator; and 
$\gamma_\omega$ runs over the lattice $L^{\prime}$. The~fact that such formulas unify many known expressions in particle 
physics can be indicated 
 by the fact that the Ore--Sato coefficients, on one hand, provide the dimensional regularization of  
ultraviolet (UV) and infrared (IR) divergent integrals and, on~the other hand, naturally include the $\varepsilon$-coefficients 
of $\zeta$ functions on number fields in the context of HG motives~\cite{beilinson2001epsilon}. 

In this paper, we focus on the asymptotic region near single Landau varieties or, stated in more technical terms, on~the asymptotic 
analysis of the codimension $1$ near Landau varieties. In~this case, such an asymptotic aspect is rather simple: $a L^{\alpha} + b$ for regular 
$a$ and $b$. Therefore, we do not really use any $\Gamma$ series, which is another factor to consider. At~this point, 
the relationship of our vanishing cycle analysis to  $\Gamma$ series is somewhat complicated because  $\Gamma$ 
series construct the duality 
to an asymptotic analysis near intersections of several components of Landau hypersurfaces. 
We are going to elaborate on this in a future publication. In~particular, for~the case of one-loop functions, the~correspondence 
between Landau varieties and $\Gamma$ series has been taken much further in~\cite{aomoto1977analytic}. In~that case, the~
Landau varieties are strata of minuscule Grassmannians, and~a one-loop function is obtained as a lift of a logarithmic bundle
on the iterated covers that are branched with the degree $2$ on the Landau varieties. Stated generally, the~vanishing cycles 
correspond to the expansion of $\Gamma$ series near codimension-1 strata, while the $\Gamma$ series contain precise branching 
information near deeper~strata.

The above considerations are valid only for the generic stratum, where the coefficients of the Feynman quadrics can be deformed 
away from the physical locus~\cite{srednyak2018universal}. This locus, in turn, is a tiny subspace inside the universal deformation. 
Therefore, for~achieving physical results, it is necessary to make use of the machinery of $b$-functions~\cite{saito1994microlocal}.
From a geometric point of view, we have to consider the restriction of the flat GZK bundle to a deep stratum. The~most naive 
practical way to realize this is to start with the Gauss--Manin connection on the generic stratum and restrict it to the physical 
stratum. One complication is that the explicit form of this connection is not known, and~a substantial development of the GZK 
formalism is needed for accomplishing~it.

Furthermore, still from the geometric point of view, it is crucial to consider the uniformization of the singularity loci and 
the lift of the bundle to such uniformization. On~general grounds, one can expect that such uniformization is achieved 
by $\vartheta$-functions~\cite{mumford2007tata}. This is well known for single-variate polynomials~\cite{mumford2007tata,sturmfels2000solving,passare2005singularities}. However, we need to have its multivariate analogy
at our disposal, which requires significant reformulation of the formalism. Note that the $\vartheta$ series constructed in 
the latter-cited papers are particular cases of $\Gamma$ series.

When applied to concrete-amplitude and cross-section calculations with lowest- and higher-order radiative corrections, e.g.,~
in $l-p$ scattering, then along with carrying out the above analysis, we will also require an intimate study of the geometry 
of  Landau polynomials, such as those shown in Formulas (\ref{eq:eqn_L02}), (\ref{eq:eqn_Land}), (\ref{eq:eqn_5pinch_h}), 
and (\ref{eq:eqn_5pinchQED_f}).

\section{Discussion}\label{section:con}

In this paper, we used the Mayer--Vietoris spectral sequence for analyzing the vanishing cycles of Feynman integrals in 
Sections~\ref{section:cycle} and \ref{section:pinch}, where we also gave a complete classification of the singularity 
development mechanisms. 
In Section~\ref{section:pinch1}, we obtained a general theorem of the localization of vanishing cycles in the integrals of polynomials.
In Section~\ref{section:pinch2}, we obtained an integral representation for the coefficients of such an asymptotic expansion, 
as well as an asymptotic expression for integrals with generic polynomials (a result that seems to be new in the mathematical 
literature). In~the whole of Section~\ref{section:exam}, we presented computations of some example diagrams in the case of arbitrary 
complex masses and QED, based on our findings in the previous sections. The~novelty of this paper could be considered the 
approach used to carry out an asymptotic analysis that employs the momentum representation, where the Feynman parameter representation 
is usually~employed.

As it is mentioned in Section~\ref{section:scat}, along with a future development of the entire ansatz, one of our future tasks shall be 
practical computations of one- and two-loop Feynman diagrams, as~well as other next-to-leading and higher-order diagrams, for~studies of 
radiative corrections in lepton--hadron scattering processes. We hope that such a new method, along with the theoretical ones existing 
in the literature~\cite{Maximon:2000hm,Gramolin:2014pva,AGIM:2015,Bucoveanu:2018soy,Fadin:2018dwp,Banerjee:2020rww,Afanasev:2021nhy,Kaiser:2022pso}, 
will be useful for any  upcoming scattering experiments, such as PRad-II, Compass++/AMBER, PRES, MAGIX, and~ULQ2, that will be 
devoted to  measurements of the root-mean-square charge radius of the proton~\cite{Xiong:2023zih}.




\end{document}